  \providecommand\BibTeX{{%
    \normalfont B\kern-0.5em{\scshape i\kern-0.25em b}\kern-0.8em\TeX}}}
\newcommand{\argmax}{\operatornamewithlimits{argmax}}
\newcommand{\rnbr}{\textsf{$r$-neighbors}}
\newcommand{\ernbr}{\textsf{$er$-neighbors}}
\newcommand{\BA}{\textsf{BA}}
\newcommand{\ERA}{\textsf{ERA}}
\newcommand{\FCA}{\textsf{FCA}}
\newcommand{\SEG}{\textsf{SEG}}
\newcommand{\AGA}{\textsf{AGA}}
\newcommand{\LUEM}{\textsf{LUEM}}
\newcolumntype{C}{>{\Centering\arraybackslash}X} 
\newtheorem{theorem}{Theorem}
\newtheorem{problemDefinition}{Problem definition}
\newtheorem{definition}{Definition}
\newtheorem{property}{Property}
\newtheorem{lemma}{Lemma}
\newtheorem{example}{Example}
\newtheorem{pDure}{Procedure}
\newcommand{\spara}[1]{\smallskip\noindent{\bf #1}}
\begin{document}

\title{LUEM : Local User Engagement Maximization in Networks}

\author{Junghoon Kim}
\affiliation{%
  \institution{UNIST}
  \country{South Korea}}
\email{junghoon.kim@unist.ac.kr}

\author{Jungeun Kim}
\affiliation{%
  \institution{Kongju National University}
  \country{South Korea}
}
\email{jekim@kongju.ac.kr}

\author{Hyun Ji Jeong}
\affiliation{%
  \institution{Korea Institute of Science and Technology Information}
  \country{South Korea}}
\email{hjjeong@kisti.re.kr}

\author{Sungsu Lim}
\affiliation{%
  \institution{Chungnam National University}
  \country{South Korea}}
\email{sungsu@cnu.ac.kr}

\renewcommand{\shortauthors}{Kim et al.}

\begin{abstract}
Understanding a social network is a fundamental  problem in social network analysis because of its numerous applications. Recently, user engagement in networks has received extensive attention from many research groups. However, most user engagement models focus on global user engagement to maximize (or minimize) the number of engaged users. In this study, we formulate the so-called  \underline{L}ocal \underline{U}ser \underline{E}ngagement \underline{M}aximization (\LUEM) problem. We prove that the {\LUEM} problem is NP-hard. To obtain high-quality results, we propose an approximation algorithm that  incorporates a traditional hill-climbing method. To improve efficiency, we propose an efficient pruning strategy while maintaining effectiveness. In addition, by observing the relationship between the degree and user engagement, we propose an efficient heuristic algorithm that preserves effectiveness. Finally, we conducted extensive experiments on ten real-world networks to demonstrate the superiority of the proposed algorithms. We observed that the proposed algorithm achieved up to 605\% more engaged users compared to the best baseline algorithms. 
\end{abstract}

\maketitle

\section{Introduction}\label{sec:introduction}
With the proliferation of mobile devices and the development of IT industry, many people use social networking services via mobile devices every day, anywhere, and anytime.
One of the biggest social networking services, Facebook, has reached over 2.8 billion monthly active users in 2021. Moreover,  approximately seven-in-ten U.S. adults use Facebook.  

With the increasing popularity of social networking services, understanding social networks is an important and fundamental problem~\cite{jin2013understanding,schneider2009understanding}. 
There have been several efforts to capture the characteristics of social networks, such as the importance of user nodes~\cite{bonacich1987power,borgatti2006graph,sabidussi1966centrality} or social relationships~\cite{brandes2001faster,de2012novel,mavroforakis2015spanning}, stability of a network~\cite{zhu2018k}, degree distribution~\cite{barabasi1999emergence}, small average distance~\cite{backstrom2012four}, and community size distribution~\cite{stephen2009explaining}. 

In addition, encouraging user engagement~\cite{malliaros2013stay,zhou2019k,bhawalkar2015preventing,ghafouri2020efficient,zhang2017finding,liu2021efficient,linghu2020global} in social networks has recently received extensive attention from many research groups. 
This study focused on engaged (activated) users in a social network. In practice, many services have many registered users but the amount of people who actively use the service is another matter. We can observe that many of the previously used services were shut down. Even if there are many registered users, if people do not actively use it, the service becomes meaningless.
The most widely accepted model~\cite{bhawalkar2012preventing,zhang2017engagement,linghu2020global,kim2022ocsm,cai2020anchored} for measuring the user engagement in a social network is based on the  \textit{minimum degree} ~\cite{seidman1983network}, which indicates that every engaged user has at least $k$ friends on a social network.
Formally, given a graph $G=(V,E)$ and positive integer $k$, if a user has at least $k$ friends, the user is considered to be \textit{engaged}. 
Otherwise, the user is disengaged from the network. 
If any user is disengaged, a set of users can iteratively be disengaged in a cascade manner because the number of friends of the remaining users changes~\cite{bhawalkar2015preventing, zhang2017finding, zhang2017olak}. Note that the number of engaged users in a social network can be identified by computing $k$-core~\cite{seidman1983network}. 
There are two main research directions regarding user engagement in social networks~\cite{bhawalkar2012preventing,chitnis2013preventing,zhang2017finding,luo2021parameterized,luo2021parameterized}. (1) \textit{The anchored $k$-core} problem was proposed by Bhawalkar et al.~\cite{bhawalkar2012preventing}. It aims to maximize the number of engaged users by anchoring $b$ disengaged users. In other words, the problem is to find $b$ important disengaged users. (2)  \textit{The collapsed $k$-core} problem was proposed by Zhang et al.~\cite{zhang2017finding}. It aims to minimize the number of engaged users by removing $b$ users, that is, the problem is to find $b$ important engaged users. 

\textcolor{black}{
\spara{Motivation.} The above user engagement problems focus on graph-level user engagement, that is, they aim to maximize (or minimize) the number of \textit{globally engaged users} in a network. This can be considered as a \textit{macro-level} user engagement.
To get all the engaged users, most approaches utilize existing cohesive subgraph models, such as $k$-core~\cite{seidman1983network} or $k$-truss~\cite{cohen2008trusses}. In the resultant cohesive subgraph, existing approaches try to find a set of anchored nodes~\cite{chitnis2013preventing} in the non-engaged users or collapsers~\cite{zhang2017finding} in the engaged users to maximize (or minimize) the number of engaged users globally, i.e., the total number of engaged or disengaged users is the major matter in the macro-level user engagement.   
Note that these approaches assume that a given network is a snapshot of a social network, i.e., each vertex is a real person and each edge indicates a friend relation in a social networking service. 
}

\textcolor{black}{
However, let assume that we want to perform targeted marketing for promotion, which is known as an effective marketing tool.
Targeted marketing involves breaking the target users into segments and then selecting influential users who widely spread our promotion.
In targeted marketing, selecting influential users for each segment is very important.
Since the target user is found by each segment, the user should have a local impact.
In these days, according to the proliferation of social networks, many companies use social networks to find influential users.
Hence, we propose the problem that finds a kind of micro-level (or called local) user engagement from social networks.
}

\textcolor{black}{
This problem is distinguished from the anchored $k$-core problem~\cite{chitnis2013preventing} and collapsed $k$-core problem~\cite{zhang2017finding}. The following example describes that our problem can be utilized to promote a social networking service.   
}
\textcolor{black}{
\begin{example}
By utilizing the {\LUEM} problem, we can find a set of people who effectively promote the products to their friends. The selected people may have many friends and there are sufficient social relationships among friends, i.e., they are important from the perspective of user engagement. This approach is distinguished from the degree-based approach. If we consider only the degree of the nodes, a set of fake accounts which has many fake friends will be selected. These fake accounts have no effect on increasing the number of engaged users since they may have only few relationships to each other. 
\end{example}
}

As an alternative to the difficult-to-expect effects in global user engagement, we focused on local user engagement, which unifies three key concepts: (1) The existence of seed nodes, (2) minimum-degree-based user engagement;  and (3) seed-based distance cohesiveness. These key concepts are described as follows:


\begin{figure}[t]
\centering
\includegraphics[width=0.95\linewidth]{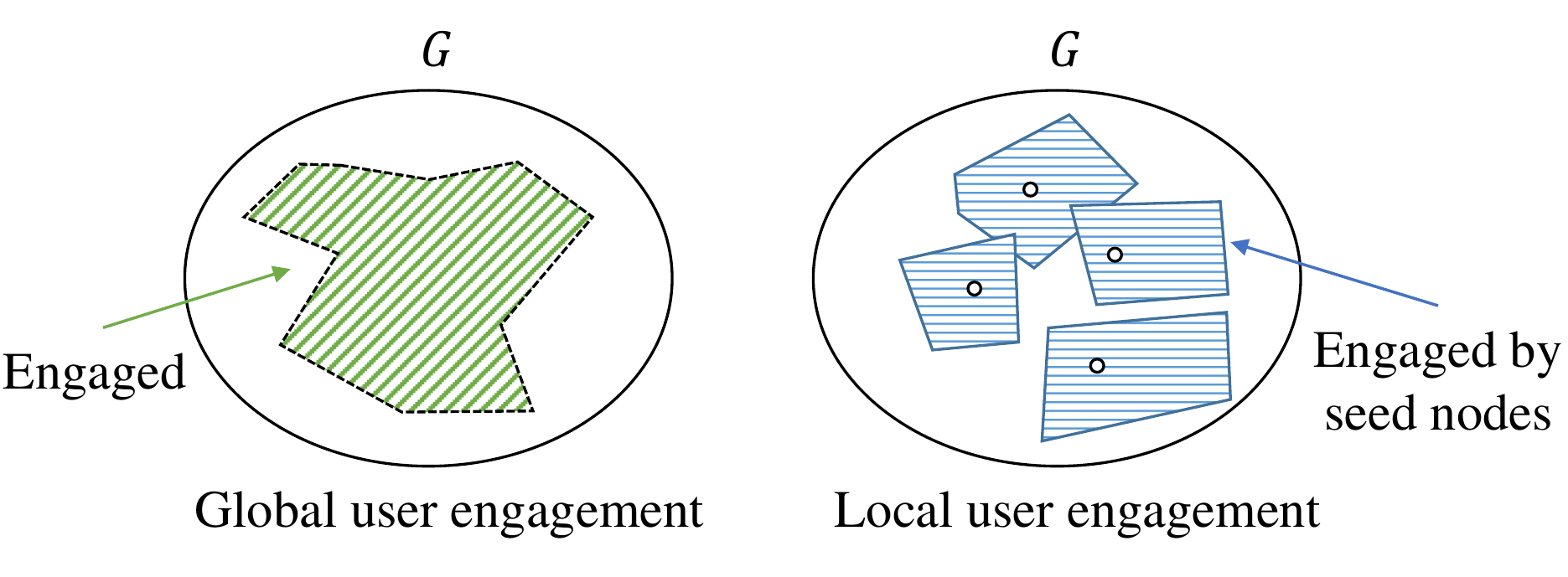}
\vspace{-0.3cm}
\caption{Global user engagement and local user engagement}
\label{fig:twoEngagement}
\end{figure}

\begin{itemize}[leftmargin=*]
    \item \textit{Existence of seed nodes} : We focused on the local structure of seed nodes to measure user engagement. This changes the perspective from a global network structure to a local network structure. Intuitively, when a node is selected as the seed node, a set of nodes closely related to the seed node can be engaged. 
    This can be considered as a micro-level of user engagement. This has led to new applications. Figure~\ref{fig:twoEngagement} depicts the difference between the global user engagement problem (macro-level, green-coloured) and proposed local user engagement problem (micro-level, blue-coloured). Notice that local user engagement has seed nodes, and the engaged users are affected by the seed nodes and can overlap. 
    There have been several similar attempts at social network analysis. The influence maximization (IM) problem~\cite{kempe2003maximizing} aims to identify a set of seed nodes to maximize the spread of influence in a social network under information diffusion models.
    \item \textit{Minimum-degree-based structural cohesiveness} : We incorporated the widely used minimum-degree user engagement model~\cite{bhawalkar2012preventing,zhang2017olak,zhang2017finding,zhang2018finding} to measure  user engagement. Formally, a user is engaged if he/she has at least $k$ friends in the local structure. Otherwise, the user is disengaged. 
    \item \textit{Distance-generalized cohesiveness} : In the social sciences field, studying the 1-hop neighbor structure to capture  structural characteristics is well established~\cite{luce1950connectivity}. For the user engagement-related problem, the 1-hop neighbor structure can be considered because many social networks are scale-free~\cite{barabasi1999emergence}; 
    thus, the degree of the nodes in social networks follows a power-law distribution. Therefore, we generalize the relationship between two nodes by incorporating the graph distance. i.e., if a node is reachable from the seed node within a specific user-defined distance threshold, the node can be engaged. 
\end{itemize}

In this paper, we aim to find an answer to the question \textit{How many close neighbor nodes will be engaged when a user is activated (becomes a seed node)?} This question is answered by proposing a new information diffusion model based on user engagement in the IM problem. Compared to the traditional information diffusion models, our model is characterized by a spreading limitation (distance) and influenced limitation (minimum degree) without using any simulation models. Hence, the results are deterministic. Figure~\ref{fig:intro_sum} illustrates a summary of our problem.


\begin{figure}[t]
\centering
\includegraphics[width=0.9\linewidth]{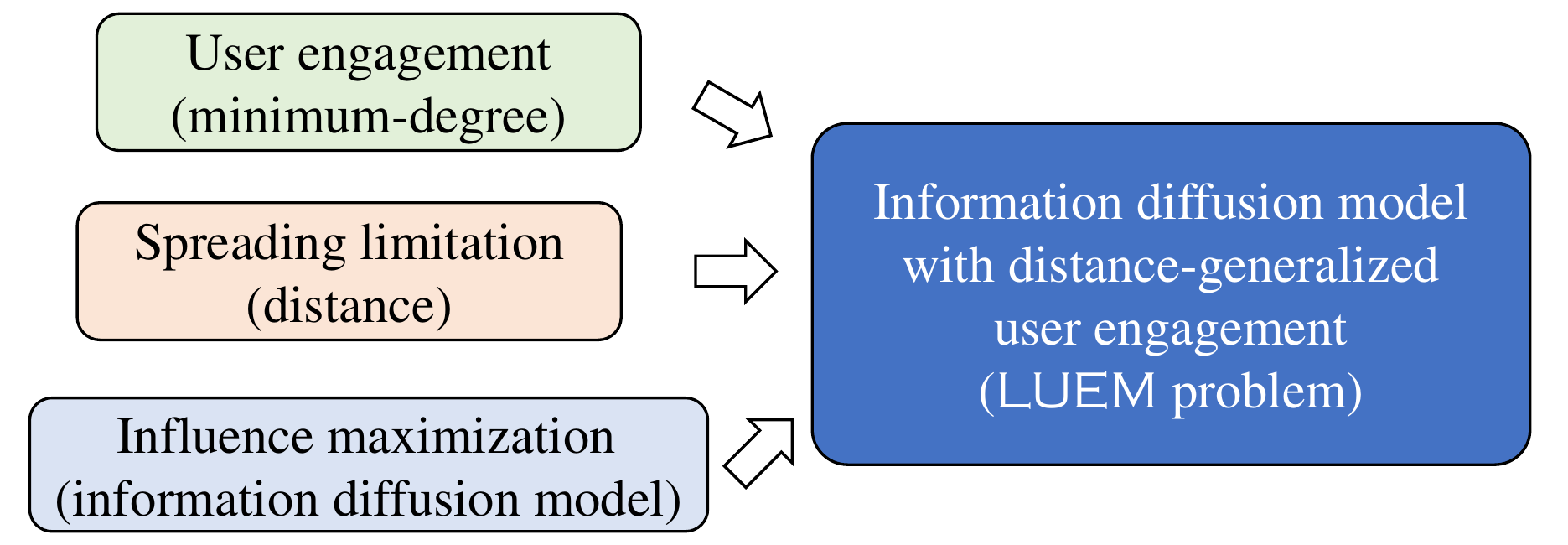}
\caption{Illustration of our problem}
\label{fig:intro_sum}
\end{figure}
We formulate a local user engagement maximization ({\LUEM}) problem. First, we propose a seed-based engaged group({\SEG}) for a specific node $v$. 
A set of nodes $H$ is called the {\SEG} of the node $v$ if the nodes in $H$ are engaged by the seed node $v$. That is, the minimum degree of the induced subgraph of nodes $H$ is larger than or equal to threshold $k$, and all nodes in $H$ are reachable from $v$ at distance $r$. 
The {\LUEM} problem is defined as follows. Given a graph $G$, budget $b$, engagement threshold $k$, and distance threshold $r$, {\LUEM} aims to find $b$ seed nodes that can maximize the number of engaged users who have at least $k$ friends in each {\SEG} and are reachable within a distance $r$ from the seed nodes.
We show that this problem is NP-hard, and the objective function is monotone submodular. Therefore, we present an approximation algorithm that holds $1-1/e$-approximation ratio. We observe that the main bottleneck of the proposed algorithm is to compute the {\SEG}s of the nodes because for each {\SEG}, it is necessary to compute the $k$-core and connected component containing a seed node ($O(|V|+|E|)$). Thus, to improve efficiency, we propose a new heuristic algorithm that incorporates the approximated neighborhood functions~\cite{boldi2011hyperanf} without computing all {\SEG}s at the initial stage.

\spara{Applications.}
The applications of our proposed problem are listed as follows:  

\begin{itemize}[leftmargin=*]
    \item \textit{Organizing a party.} Let us assume that Joy wants to host a farewell party for three consecutive holiday days by serving dinner. She would like to invite three of her friend groups. Because her friends are from different groups, such as a table tennis club, graduate association club, lab colleagues, etc., she would like to invite as many non-engaged people in the party as possible. Here, finding a few friends to help her organize party.
    \item \textit{Finding top-$b$ influential communities.} Finding the engaged users of a specific node can be considered as  finding an influential community of the specific node. 
    \item \textit{Finding key users in a social network.} Important users are nodes that can affect the network structure to a greater extent compared to other nodes~\cite{xiaolong2014review}. Our {\LUEM} can be used to identify important users in a social network using the user engagement model.
    \item \textit{Virus propagation prediction/estimating influence spreading.} \textcolor{black}{ {\LUEM} can identify a set of engaged users of a specific node, and thus, it can be utilized for predicting the virus propagation and estimating the influence spreading. If a user is infected, a set of close neighbor nodes has \textcolor{black}{a} high probability of getting infected.  }
\end{itemize}

\spara{Challenges and Contributions}
Because the {\LUEM} problem is NP-hard, computing an exact solution within polynomial time is prohibited. Thus, the first challenge is to compute an effective solution. The second challenge is the efficient computation of the solution. To address these challenges, we propose two algorithms: (1) An effective $r$-neighbor-based approximation algorithm (\ERA)  and (2) Fast circle algorithm (\FCA). First, the {\ERA} algorithm incorporates a traditional greedy algorithm to maximize the number of engaged users. Owing to the submodularity of our objective function, we can design an efficient strategy by not computing all engaged users for every iteration. {\ERA} holds a $1-1/e$ approximation ratio.  In section~\ref{sec:approx}, we present our main idea and rigorous proof to demonstrate  why the pruning strategy preserves the approximation ratio.
Next, we propose a heuristic algorithm called {\FCA} to improve the efficiency of the proposed algorithms. Even if our proposed {\ERA} significantly improves efficiency, in the worst case, it has the same time complexity as a traditional greedy algorithm. Thus, in {\FCA}, we propose a very fast heuristic algorithm that incorporates the approximated neighborhood function based on our observations.

The contributions of this research are summarized as follows: 

\begin{itemize}[leftmargin=*]
    \item  \textit{Problem definition : } To the best of our knowledge, this is the first study to identify a set of seed nodes to maximize the number of engaged users in a social network. 
    \item \textit{Theoretical analysis : } We prove that the objective function of {\LUEM}  is monotone submodular, and the {\LUEM} problem is NP-hard.  
    \item \textit{Designing new algorithms : } Because our problem is NP-hard, we propose a $1-1/e$ approximation algorithm, as well as a heuristic algorithm to improve efficiency. 
    \item \textit{Extensive experimental study : } Using real-world datasets, we conduct extensive experiments to demonstrate the superiority of the proposed algorithms. 
\end{itemize}

\section{Problem Statements}

We present a \underline{L}ocal \underline{U}ser \underline{E}ngagement \underline{M}aximization ({\LUEM}) problem. In this study, we consider an unweighted and undirected graph. Given a graph $G=(V,E)$ and set of nodes $H\subseteq V$, we denote $G[H]=(H, E[H])$ as a subgraph of $G$ induced by nodes $H$. 
Table~\ref{tab:notation} lists the basic notations used in this study. First, We introduce basic definitions of some terminology for presenting our problem. 

\begin{table}[t]
\vspace{-0.2cm}
\caption{Notation}
\centering
\label{tab:notation}
\begin{tabular}{c|c}
\hline
Description                     & Notation                  \\ \hline \hline
minimum degree threshold       & $k$                       \\ \hline
distance threshold              & $r$                       \\ \hline
user engagement constraint      & $\delta(.)$                 \\ \hline
distance constraint             & $\tau(.)$                 \\ \hline
approximated neighbor value    & $ANV$                     \\ \hline
{\rnbr} of node $v$ in $G$      & $N_r(v,G)$                \\ \hline
effective $r$-neighbors size of node $v$       & $\mathcal{E}_{r}(v)$   \\ \hline
{\SEG} of node $v$      & $\mathcal{K}_{k,r}(v)$                \\ \hline
distinct engaged users in $\mathcal{K}$ & {$\rho$}($S$)   \\ \hline
engagement gain & {$\rho$}($v$, $S$)   \\ \hline
\hline
\end{tabular}  
\end{table}

\begin{definition}
(\underline{User engagement constraint $\delta(.)$})\\
Given a graph $G=(V,E)$ and positive integer $k$ called the minimum degree threshold, a subgraph $H\subseteq V$ satisfies the user engagement constraint if it is connected, and the minimum degree of induced subgraph $G[H]$ is larger than or equal to $k$, i.e., $\delta(H)\geq k$.
\end{definition}

Given a graph $G=(V,E)$, finding a maximal subgraph satisfying the minimum degree constraint is the same as that in the classic   $k$-core~\cite{seidman1983network} problem. $k$-core can be computed in polynomial time. 


\begin{definition}
(\underline{Distance constraint $\tau(.)$})\\
Given a graph $G=(V,E)$, seed node $s\in V$, and positive integer $r$ called a distance threshold, a subgraph $H\subseteq G$ satisfies the distance constraint if it contains $s$, and the distance from $s$ to any node in the induced subgraph $G[H]$ is less than or equal to $r$, that is, $\tau(s, H)\leq r$. 
\end{definition}

We next define {\rnbr} based on the distance threshold $r$. 

\begin{definition}
(\underline{{\rnbr} $N_r$})\\
Given a graph $G=(V,E)$, node $v\in V$, and positive integer $r$, {\rnbr} of $v$, denoted as $N_r(v, G)$, is a set of nodes that are reachable from $v$ within a distance $r$ in graph $G$. We use $N(v,G) = N_1(v,G)$, and $N(v)$ if it is obvious. 
\end{definition}

We are now ready to discuss user-level engagement by defining a set of users engaged by the seed node \textcolor{black}{as follows}.

\begin{definition}
(\underline{Seed-based engaged group(\SEG)}) \\
Given a graph $G=(V,E)$, seed node $s$, and positive integers $k$ and $r$, seed-based engaged group of the node $s$, denoted as $\mathcal{K}_{k,r}(s)$, is a maximal set of nodes satisfying $\delta(\mathcal{K}_{k,r}(s))\geq k$ and $\tau(s, \mathcal{K}_{k,r}(s))\leq r$. If it is obvious, we use $\mathcal{K}(s)$ instead of $\mathcal{K}_{k,r}(s)$. 
\end{definition}

Note that a set of users in $\mathcal{K}(s)$ is considered as \textit{engaged} by the seed node $s$.
As we have discussed in section~\ref{sec:introduction}, we aim at maximizing the number of engaged users by selecting seed nodes. 
We next define the engagement gain of {\SEG}.
\begin{definition}
(\underline{Engagement gain}) \\
Given a graph $G=(V,E)$, and a set of seed nodes $\mathcal{S}$, and a new seed node $v\not \in \mathcal{S}$, the engagement gain $\rho(v, \mathcal{S})$ is the number of newly engaged users in $\mathcal{K}(v)$, i.e.,  $|\mathcal{K}(v) \setminus \bigcup_{s\in \mathcal{S}} \mathcal{K}(s)|$. 
\end{definition}
We formally define our objective function named local user engagement function. 

\begin{definition}
(\underline{Local user engagement function  $\rho(.)$}). \\
Given a set of seed nodes $\mathcal{S}$, local user engagement function $\rho (\mathcal{S})$ returns the number of distinct engaged users, namely, $|\bigcup_{s \in \mathcal{S}} \mathcal{K}_{k,r}(s)|$ . 
\end{definition}

\begin{example}
In Figure~\ref{fig:twoKrcc}, suppose that $k=2$ and $r=2$. We can check two {\SEG}s : $\mathcal{K}_{2,2}(b)$ and $\mathcal{K}_{2,2}(f)$. Suppose that a set of seed nodes $\mathcal{S}=\varnothing$. Notice that  $\rho(b,\mathcal{S})=3$ and $\rho(f,\mathcal{S})=5$. However, when  $\mathcal{S}=\{f\}$, $\rho(b,\mathcal{S})$ is changed from $3$ to $2$ because a node $c$ is already engaged owing to the seed node $d$. 
\end{example}

\begin{figure}[t]
\centering
\includegraphics[width=0.9\linewidth]{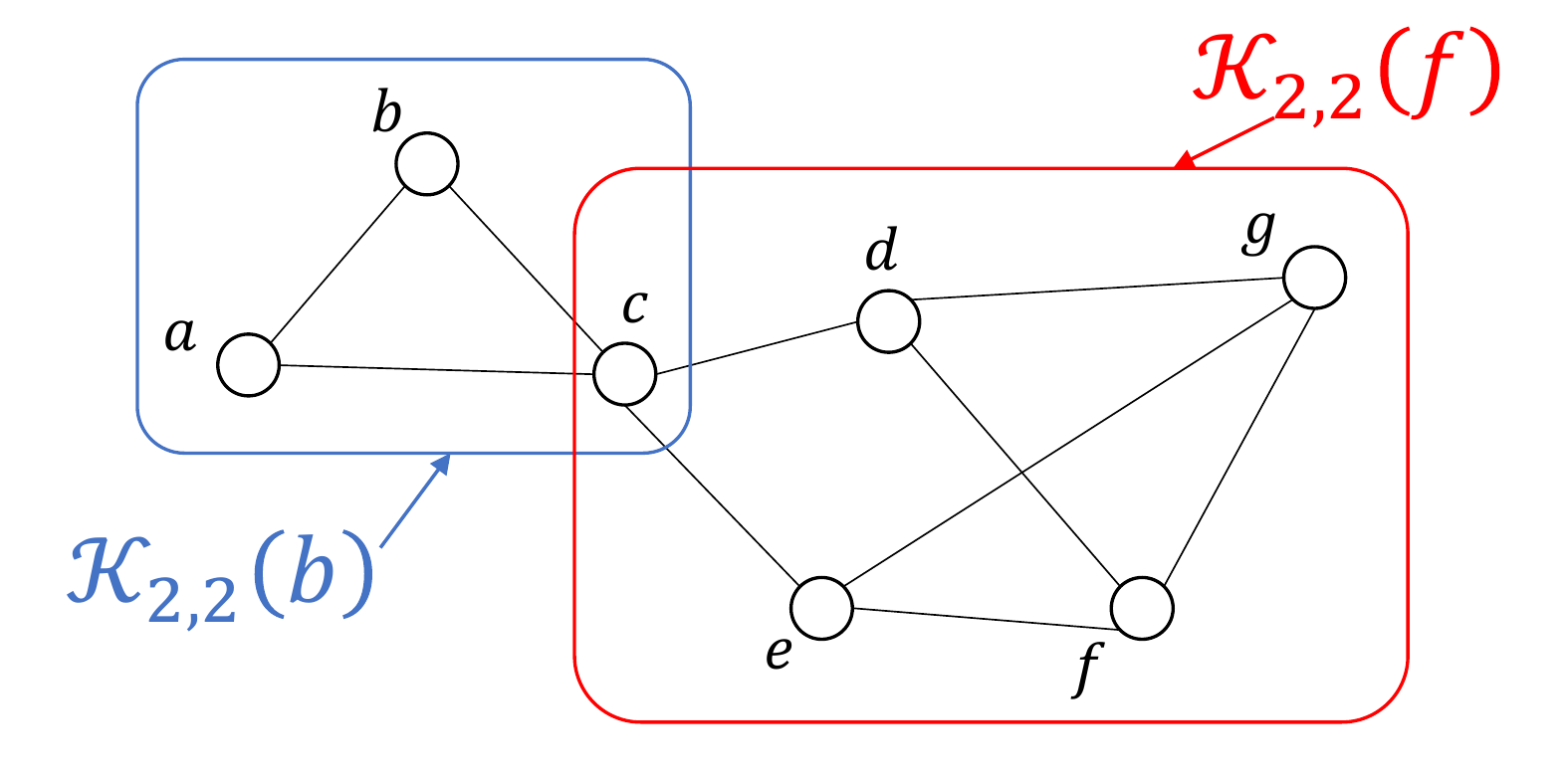}
\vspace{-0.3cm}
\caption{Two {\SEG}s when $k=2$ and $r=2$}
\label{fig:twoKrcc}
\end{figure}

\begin{property}\label{property:kcorecontain}
Given a graph $G=(V,E)$, $k$, and $r$, the {\SEG} of any node $v\in V$ always belongs to $k$-core and is unique. 
\end{property}

\begin{proof}
Because {\SEG} is a maximal cohesive subgraph within a distance $r$ from the seed node $s$, it is unique. When $r=\infty$, the result of any {\SEG} belongs to the $k$-core because the $k$-core does not require the connectivity constraint of the resultant subgraphs. 
\end{proof}

Note that $\mathcal{K}_{k,r}(v) \subseteq  N_r(v)$. Observe that the union of all the {\SEG}s in a network is the same as $k$-core. This implies that the maximum number of engaged users of given networks can be computed in an efficient way. 
As discussed in the applications, we are interested in finding $b$ seed nodes to maximize the number of engaged users. Now we can formulate our problem. The definition of our {\LUEM} problem is as follows. 

\begin{problemDefinition}
(Local User Engagement Maximization (\LUEM)). 
Given a graph $G=(V,E)$, positive integers $b$, $k$, and $r$, the \LUEM\ problem aims to identify  $b$ seed nodes, denoted as $\mathcal{S}$, such that the number of distinct engaged users is maximized; in other words, maximizing $\rho(\mathcal{S})$ such that $|\mathcal{S}|\leq b$. 
\end{problemDefinition}

Note that any pair of {\SEG}s can overlap. Next, we present some of the important properties of the {\LUEM}  problem. 

\begin{property}
The local user engagement function $\rho(.)$ is submodular.
\end{property}

\begin{proof}

It is known that a function $g$ is submodular if for all $S\subseteq T\subseteq V$, all $v\in V\setminus T$,
\textcolor{black}{$g(S\cup \{v\})-g(S)\geq g(T\cup \{v\})-g(T)$~\cite{schrijver2003combinatorial}}
. 
Assume that $S\subseteq T\subseteq V$ and  there is a node $u\in V\setminus T$ which makes
\textcolor{black}{$\rho(S\cup \{u\})-\rho(S)< \rho(T\cup \{u\})-\rho(T)$}
. It implies that $\rho(u, T)$  is larger than  $\rho(u, S)$.
However, we know that (1) $\rho(T)\geq \rho(S)$; and (2) a user engaged by the set of seed nodes $S$ is always the engaged user by the set of seed nodes $T$. 
Hence, the number of engageable users by the node $u$ in $T$ is smaller than or equal to the number of engageable users by the node $u$ in $S$. 
Hence, due to $S \subseteq T$, 
\textcolor{black}{
$\rho(S\cup \{u\})-\rho(S)< \rho(T\cup \{u\})-\rho(T)$
}
does not hold. It implies that our assumption is not true, and our function $\rho(.)$ is submodular. 
\end{proof}

\begin{property}
$\rho(.)$ function is monotone.
\end{property} 

\begin{proof}
The proof is trivial. For any $S \subseteq T \subseteq V$, $\rho(S) \leq \rho(T)$ always holds. Therefore, {$\rho(.)$} is monotone. 
\end{proof}

\begin{theorem}\label{theorem:NP}
{\LUEM} problem is NP-hard. 
\end{theorem}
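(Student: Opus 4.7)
The plan is to prove NP-hardness by a polynomial-time reduction from the Maximum Coverage problem, whose decision version is classically NP-complete and whose maximization objective closely mirrors our local user engagement function. Given a Maximum Coverage instance with universe $U=\{u_1,\ldots,u_n\}$, collection $\mathcal{C}=\{S_1,\ldots,S_m\}$, and budget $b$, I will construct a graph $G$ together with parameter values $k,r$ so that an optimal {\LUEM} solution of size $b$ on $G$ directly translates back into an optimal Maximum Coverage solution of size $b$, and vice versa.

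The construction will use three kinds of vertices. For every element $u_i$, I create an \emph{element vertex} $x_i$; for every set $S_j$, I create a \emph{set vertex} $y_j$; and for each $y_j$, a block of $T$ leaf \emph{padding vertices} $p_{j,1},\ldots,p_{j,T}$ attached only to $y_j$, where $T$ is a polynomial I will fix later (for instance $T=m+2$). An edge $(y_j,x_i)$ is inserted whenever $u_i\in S_j$. I then take $k=r=1$, so that $\mathcal{K}_{k,r}(v)$ reduces to the largest $H\subseteq \{v\}\cup N(v)$ whose induced subgraph has minimum degree at least one. Under these parameters, a direct check shows that $\mathcal{K}(y_j)=\{y_j\}\cup\{p_{j,1},\ldots,p_{j,T}\}\cup\{x_i : u_i\in S_j\}$ has size $1+T+|S_j|$, while $\mathcal{K}(x_i)=\{x_i\}\cup\{y_j : u_i\in S_j\}$ has size at most $m+1$ and $\mathcal{K}(p_{j,\ell})=\{p_{j,\ell},y_j\}$ has size exactly $2$.

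The correctness argument then splits into two parts. First, for any pure set-vertex seed set $\mathcal{S}=\{y_{j_1},\ldots,y_{j_b}\}$, the disjointness of the padding blocks gives $\rho(\mathcal{S})=b+bT+|\bigcup_\ell S_{j_\ell}|$, so maximizing $\rho$ over pure-$y$ choices is exactly the Maximum Coverage objective. Second, I must rule out mixed seed sets that include some $x_i$ or $p_{j,\ell}$; this is the step I expect to be the main obstacle. My plan here is a swap argument: replacing any $x_i$ or $p_{j,\ell}$ in a candidate seed set by an unused $y_{j^*}$ brings in at least $T$ fresh engaged users (the private padding $p_{j^*,1},\ldots,p_{j^*,T}$, which cannot appear in any other $\mathcal{K}(s)$), while deleting the old seed forfeits at most $m+1$ engaged users. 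Choosing $T>m+1$ makes every such swap strictly improving, so the optimum must consist of $b$ set-vertices (assuming without loss of generality $b\leq m$, since otherwise Maximum Coverage is trivially solvable).

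Combining the two parts, the constructed instance $(G,b,k=1,r=1)$ admits a seed set with $\rho(\mathcal{S})\geq b+bT+K$ if and only if the Maximum Coverage instance admits $b$ sets covering at least $K$ elements. Since the construction is polynomial in $n$ and $m$, this yields the desired NP-hardness reduction and establishes Theorem~\ref{theorem:NP}.
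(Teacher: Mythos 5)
Your proposal is correct and is essentially the paper's own argument: the paper likewise reduces from set cover by creating a set vertex per subset, an element vertex per universe element, and a large pendant padding block attached to each set vertex, then sets $k=r=1$ so that each {\SEG} is just an ego-network and the padding forces any good seed set to consist of set vertices. The only nit is in your swap step: a padding vertex $p_{j^*,\ell}$ does lie in its own $\mathcal{K}(p_{j^*,\ell})$, so if other seeds are padding vertices of block $j^*$ then fewer than $T$ of that block's padding vertices are fresh; this is repaired by choosing $j^*$ from a block containing no seed (possible since WLOG $b<m$) or simply by taking $T>2m$.
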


\begin{proof}
Proof can be checked in~\ref{app:proof}
\end{proof}

\spara{Comparing with \cite{sozio2010community}.}
Note that computing {\SEG} is the same with finding a solution to a community search problem~\cite{sozio2010community}. The problem is that : given a graph $G$ and \textcolor{black}{a} set of query nodes $Q$, their model aims to find a connected subgraph while maximizing \textcolor{black}{the} minimum degree such that 1) contains all the query nodes $Q$ and 2) all nodes in the subgraph are at a distance to $Q$ less than a threshold. Hence, by restricting $|Q|=1$ and applying a degree constraint, \cite{sozio2010community} can be utilized to find the {\SEG}. However, note that finding {\SEG}s is different  from finding a solution for {\LUEM}. When we use a greedy approach to find a solution for {\LUEM} by utilizing \cite{sozio2010community}, the approach is the same as our basic algorithm~\ref{sec:ba}.

\section{Approximation Algorithms}\label{sec:approx}

In this section, we introduce approximation algorithms to solve the {\LUEM} problem. 
First, we present a framework of the approximation algorithm using the characteristics of our objective function and present a basic algorithm(\BA). 
To improve efficiency, we present an effective $r$-neighbor-based algorithm (\ERA) with a pruning strategy.

\subsection{Computing {\SEG}}

\begin{algorithm}[ht]
\SetKw{return}{return}
\SetKw{null}{null}
\SetKwData{belongings}{belongings}
\SetKwFunction{kcore}{$k$-core}
\SetKwFunction{hEgo}{$r$-ego}
\SetKwFunction{connectedComp}{connectedComp}
\SetKwInOut{Input}{input}
\SetKwInOut{Output}{output}
\Input{$G=(V,E)$, user parameters $k$, $r$, seed node $s$}
\Output{{\SEG} $\mathcal{K}_{k,r}(s)$}
$\mathcal{K}_{k,r}(s) \leftarrow \varnothing$\;
$D \leftarrow$ \kcore{$G[N_r(s)]$, $k$} \;
\If{$D$ contains $s$}{
    \return \connectedComp{$D$, $s$}\;
}
\return $\null$\;
\caption{Computing \textsf{SEG}}
\label{alg:SEG_compute}
\end{algorithm} 

In this section, we present how to compute an {\SEG} given a seed node $s$ using Algorithm~\ref{alg:SEG_compute}. 
Computing an {\SEG} is simple and intuitive. First, we obtain an induced network of {\rnbr} from  the seed node $s$, then compute the $k$-core (line 2). Because the $k$-core returns multiple connected components, we select a connected component that contains the seed node $s$ (line 4). If there is no connected component that contains  seed node $s$ in the $k$-core, the algorithm returns null (line 5). Hence, computing an {\SEG} requires $O(|V|+|E|)$, and computing all  {\SEG}s in a graph requires $O(|V|(|V|+|E|))$.

\begin{figure}[t]
\centering
\includegraphics[width=0.99\linewidth]{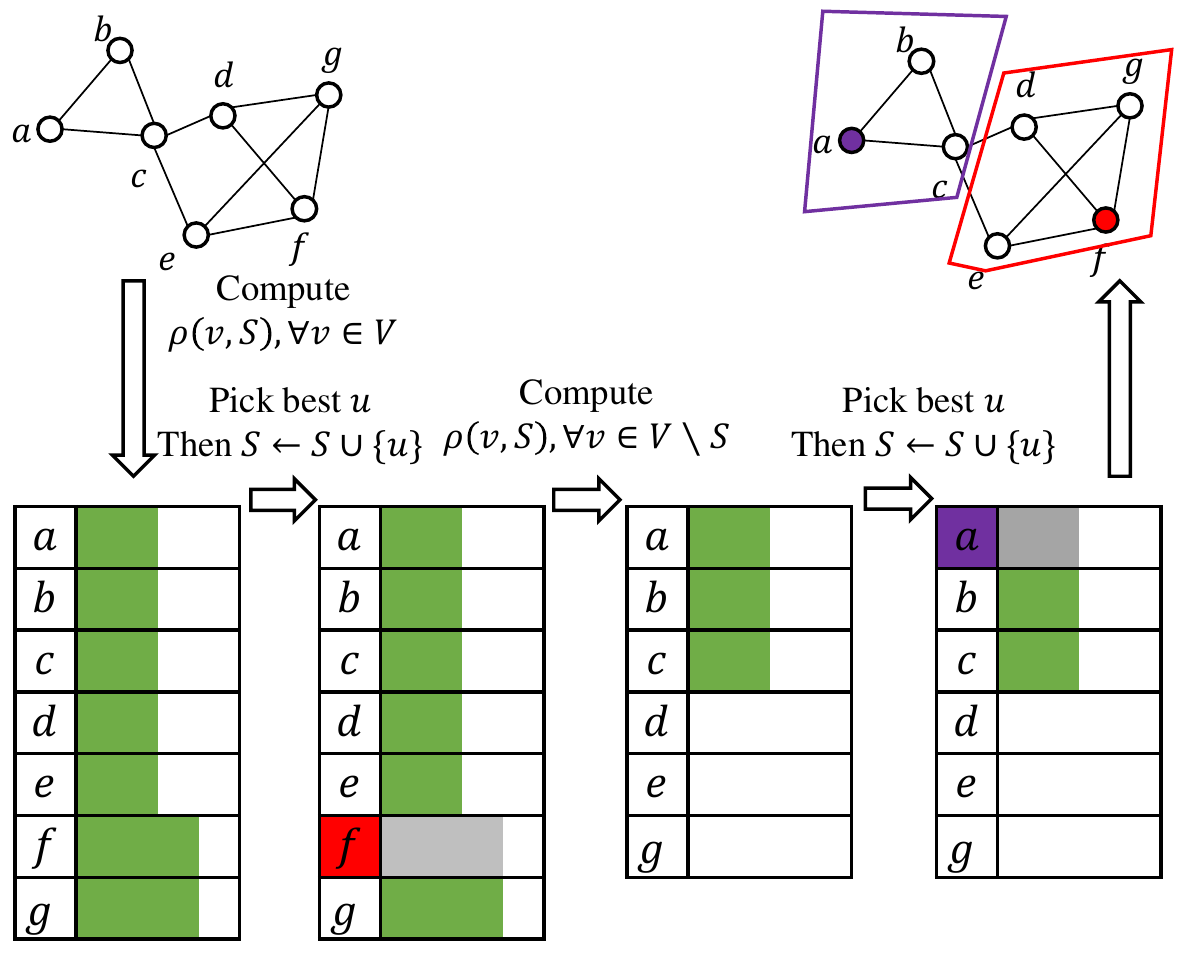}
\caption{Basic algorithm ($b=2, k=2, r=1$)}
\label{fig:baseline_example} 
\end{figure}

\subsection{Framework}

\begin{algorithm}[ht]
\SetKw{break}{break}
\SetKw{return}{return}
\SetKw{next}{next}
\SetKw{true}{true}
\SetKw{AND}{AND}
\SetKw{null}{null}
\SetKwData{C}{C}
\SetKwFunction{SEG}{SEG}
\SetKwInOut{Input}{input}
\SetKwInOut{Output}{output}
\Input{$G=(V,E)$, $k$, $r$, $b$}
\Output{A set of seed nodes $\mathcal{S}$}
$\mathcal{S} \leftarrow \varnothing$\;
\For{$|\mathcal{S}| \neq b$}{
    $s \leftarrow \argmax_{v\in V}$ $\rho(v, \mathcal{S})$\;
    $\mathcal{S} \leftarrow$
    \textcolor{black}{ $\{s\} \cup \mathcal{S}$}\;
}
\return $\mathcal{S}$\; 
\caption{\mbox{Algorithmic framework of hill-climbing approach}}
\label{alg:framework}
\end{algorithm}

Algorithm~\ref{alg:framework} depicts the framework of the $1-\frac{1}{e}$ approximation algorithm. This framework incorporates a widely used greedy optimization search named hill-climbing approach~\cite{nemhauser1978analysis} to find a solution.
It iteratively identifies a seed node that maximizes the number of engaged users when it merges with the current solution $\mathcal{S}$. Note that after adding a seed node, the number of engageable users of possible seed nodes can be decreased or unchanged. Therefore, it is required to check which seed node has the largest number of disengaged users at every iteration. After finding $b$ seed nodes, the procedure is terminated. 

\begin{algorithm}[ht]
    \SetKw{break}{break}
    \SetKw{return}{return}
    \SetKw{next}{next}
    \SetKw{true}{true}
    \SetKw{AND}{AND}
    \SetKw{null}{null}
    \SetKwData{C}{C}
    \SetKwFunction{SEG}{SEG}
    \SetKwFunction{kcore}{$k$-core}
    \SetKwFunction{updateEF}{updateEF}
    \SetKwInOut{Input}{input}
    \SetKwInOut{Output}{output}
    \Input{$G=(V,E)$, $k$, $r$, $b$}
    \Output{A set of seed nodes $\mathcal{S}$}
    $\mathcal{S} \leftarrow \varnothing$\;
    $D \leftarrow $ \kcore{$G$,$k$}\;
    $C \leftarrow \{v_1: \mathcal{K}_{k,r}(v_1), v_2: \mathcal{K}_{k,r}(v_2), \ldots, v_{|D|}: \mathcal{K}_{k,r}(v_{|D|}) \}$ \;
    \For{$|\mathcal{S}| \neq b$}{
        $s \leftarrow \argmax_{v \in C.\text{keys}} \rho(v, \mathcal{S})$ \tcp*{To compute $\rho$, $\mathcal{K}$ is utilized}
        $\mathcal{S} \leftarrow \{s\} \cup \mathcal{S}$\;
    }
    \return $\mathcal{S}$\; 
    \caption{Basic Algorithm}
    \label{alg:realImpl}
    \end{algorithm}

\subsection{Basic algorithm(\BA)}\label{sec:ba}
The direct implementation of the Algorithm~\ref{alg:framework} is the basic algorithm (\BA) with a $1-\frac{1}{e}$ approximation ratio. The procedure of the {\BA} is as follows. 

\begin{pDure}
\textcolor{black}{
At the initial stage, it computes all the {\SEG}s (Lines 1-3). Then, until finding $b$ seed nodes, it iteratively finds a node that can maximize the number of engaged users. The selected node will be added to the solution (Lines 4-6). Finally, it returns the selected seed nodes as a result (Line 7). 
}
\end{pDure}

\begin{figure*}[ht]
\centering
\includegraphics[width=0.7\linewidth]{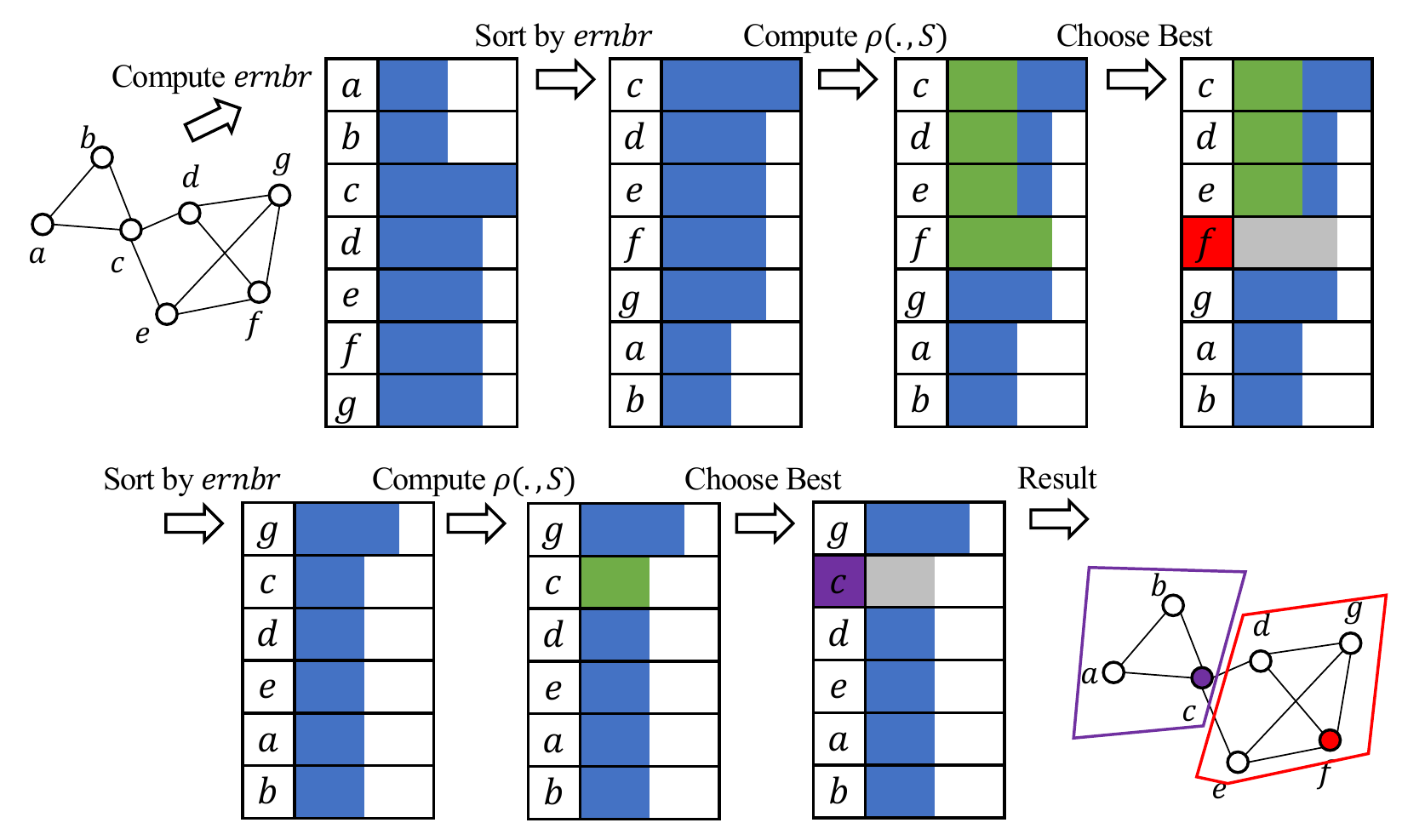}
\caption{{\ernbr}-based pruning ($b=2, k=2, r=1$)}
\label{fig:ernbr_example}
\end{figure*}

\begin{example}
In Figure~\ref{fig:baseline_example}, suppose that $k=2, r=1$, and $b=2$. To find a solution, we first select a node {\textquotesingle}$f${\textquotesingle} because the $\rho(f, S)$ is $4$. Therefore, the nodes $\{d, e, f, g\}$ are engaged. 
Next, we select a node {\textquotesingle}$a${\textquotesingle}, which enables engaging nodes $\{a, b, c\}$. When we find two seed nodes $\{f, a\}$, we terminate the algorithm because $b=2$.
\end{example}
In the following, we briefly demonstrate the $1-1/e$ approximation ratio of the framework. Please refer to \cite{nemhauser1978analysis, kempe2003maximizing} for further details on this topic.

\spara{Theoretical analysis.} We next check the approximability of our proposed algorithm.  

\begin{lemma}\label{lemma:ineq}

Let $OPT$ denote an optimal solution. We suppose that there is an identified solutions $\mathcal{S}$ such that $|\mathcal{S}| < b$. Then the following inequality always holds.

\textcolor{black}{
\begin{align}\label{Equation:lemma1}
\max_{x\in V} [\rho(\mathcal{S}\cup \{x\}) - \rho(\mathcal{S})]\geq \frac{1}{b} [\rho(OPT) - \rho(\mathcal{S})]
\end{align}
}

\end{lemma}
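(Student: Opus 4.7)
The plan is to execute the classical Nemhauser--Wolsey--Fisher argument, leveraging the fact that $\rho(\cdot)$ has already been shown to be monotone and submodular in the preceding properties. The goal is to argue that in any state $\mathcal{S}$ produced by the greedy procedure with $|\mathcal{S}| < b$, at least one element of $V$ can close a $1/b$ fraction of the gap to the optimum, so that the greedy step cannot do worse than that.

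First, I would enumerate the optimal solution as $OPT = \{o_1, o_2, \ldots, o_b\}$ in an arbitrary but fixed order, and invoke monotonicity (already proved) to write $\rho(OPT) \leq \rho(OPT \cup \mathcal{S})$. Subtracting $\rho(\mathcal{S})$ from both sides gives the upper bound $\rho(OPT) - \rho(\mathcal{S}) \leq \rho(OPT \cup \mathcal{S}) - \rho(\mathcal{S})$, which reframes the ``gap'' in a form amenable to telescoping.

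Next, I would telescope the right-hand side along the chain $\mathcal{S} \subseteq \mathcal{S} \cup \{o_1\} \subseteq \mathcal{S} \cup \{o_1,o_2\} \subseteq \cdots \subseteq \mathcal{S} \cup OPT$, yielding
\begin{equation*}
\rho(OPT \cup \mathcal{S}) - \rho(\mathcal{S}) = \sum_{i=1}^{b} \bigl[\rho(\mathcal{S} \cup \{o_1,\ldots,o_i\}) - \rho(\mathcal{S} \cup \{o_1,\ldots,o_{i-1}\})\bigr].
\end{equation*}
Each term is a marginal gain of $o_i$ on a superset of $\mathcal{S}$, so by the submodularity of $\rho(\cdot)$ it is upper bounded by $\rho(\mathcal{S} \cup \{o_i\}) - \rho(\mathcal{S})$. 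Replacing every summand by $\max_{x\in V} [\rho(\mathcal{S}\cup \{x\}) - \rho(\mathcal{S})]$ then bounds the whole sum by $b \cdot \max_{x\in V} [\rho(\mathcal{S}\cup \{x\}) - \rho(\mathcal{S})]$.

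Combining the two displayed bounds gives $\rho(OPT) - \rho(\mathcal{S}) \leq b \cdot \max_{x\in V} [\rho(\mathcal{S}\cup \{x\}) - \rho(\mathcal{S})]$, and dividing by $b$ yields exactly inequality~(\ref{Equation:lemma1}). I do not expect a real obstacle here: the only subtle point is making sure the maximization in the lemma is over all of $V$ (not $V \setminus \mathcal{S}$), which is harmless because adding an element already in $\mathcal{S}$ gives marginal gain $0$ and therefore cannot beat the $o_i$ terms unless the optimum is already achieved, in which case the inequality holds trivially.
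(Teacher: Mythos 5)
Your proposal is correct and follows essentially the same route as the paper's own proof in Appendix~\ref{app:lemma_proof}: monotonicity to pass to $\rho(OPT\cup\mathcal{S})$, a telescoping sum over the elements of the optimal solution, submodularity to bound each increment by a marginal gain on $\mathcal{S}$, and then the max over $V$ times $b$. The only cosmetic difference is that the paper telescopes over $OPT\setminus\mathcal{S}$ (with $l\leq b$ terms) while you telescope over all of $OPT$, which changes nothing since the maximal marginal gain is nonnegative.
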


\begin{proof}
Proof of Lemma~\ref{lemma:ineq} can be checked in \ref{app:lemma_proof}
\end{proof}

\begin{theorem}\label{theorem:ratio}
Algorithm~\ref{alg:framework} holds $1-\frac{1}{e}$ approximation ratio, i.e., $\rho(\mathcal{S})\geq (1-\frac{1}{e}) \rho(OPT)$. 
\end{theorem}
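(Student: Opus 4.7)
The plan is to leverage Lemma~\ref{lemma:ineq} iteratively together with monotonicity to obtain a clean recursion on the gap $\rho(OPT) - \rho(\mathcal{S}_i)$, where $\mathcal{S}_i$ denotes the greedy solution after $i$ iterations of Algorithm~\ref{alg:framework} (with $\mathcal{S}_0 = \varnothing$ and $\mathcal{S}_b$ the final output). This is the standard Nemhauser--Wolsey--Fisher recipe for monotone submodular maximization under a cardinality constraint, and the groundwork (submodularity, monotonicity, and the single-step inequality) is already in place in the excerpt.

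First I would observe that at step $i+1$ the greedy rule picks $s_{i+1} = \argmax_{v \in V}[\rho(\mathcal{S}_i \cup \{v\}) - \rho(\mathcal{S}_i)]$, so by Lemma~\ref{lemma:ineq} applied to the partial solution $\mathcal{S}_i$ we obtain
\begin{equation*}
\rho(\mathcal{S}_{i+1}) - \rho(\mathcal{S}_i) \;\geq\; \frac{1}{b}\bigl[\rho(OPT) - \rho(\mathcal{S}_i)\bigr].
\end{equation*}
Rearranging yields the contraction
\begin{equation*}
\rho(OPT) - \rho(\mathcal{S}_{i+1}) \;\leq\; \Bigl(1 - \tfrac{1}{b}\Bigr)\bigl[\rho(OPT) - \rho(\mathcal{S}_i)\bigr].
\end{equation*}

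Next I would iterate this contraction from $i = 0$ to $i = b-1$. Since $\rho(\mathcal{S}_0) = \rho(\varnothing) = 0$, induction gives
\begin{equation*}
\rho(OPT) - \rho(\mathcal{S}_b) \;\leq\; \Bigl(1 - \tfrac{1}{b}\Bigr)^{b} \rho(OPT).
\end{equation*}
Applying the elementary bound $(1 - 1/b)^b \leq 1/e$ (valid for every positive integer $b$) and rearranging yields $\rho(\mathcal{S}_b) \geq (1 - 1/e)\,\rho(OPT)$, which is the desired approximation guarantee.

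There is no real obstacle in this argument: the two nontrivial ingredients, namely monotonicity of $\rho$ and the one-step greedy inequality of Lemma~\ref{lemma:ineq}, are already established earlier in the paper, and everything else is a routine telescoping induction plus the standard $(1-1/b)^b \le 1/e$ estimate. The only detail worth stating carefully is that Lemma~\ref{lemma:ineq} requires $|\mathcal{S}_i| < b$, which is exactly the regime in which we apply it (for $i = 0, 1, \ldots, b-1$), so the induction is well-posed all the way to the final step producing $\mathcal{S}_b$.
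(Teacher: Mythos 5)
Your proposal is correct and follows essentially the same argument as the paper: both apply Lemma~\ref{lemma:ineq} at each greedy step and unroll the resulting recursion, finishing with the bound $(1-\tfrac{1}{b})^b \le \tfrac{1}{e}$. The only cosmetic difference is that you phrase the recursion as a multiplicative contraction of the gap $\rho(OPT)-\rho(\mathcal{S}_i)$ starting from $\rho(\varnothing)=0$, whereas the paper unrolls the recurrence on $\rho(\mathcal{K}^i)$ directly and sums the geometric series; these are equivalent.
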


\begin{proof}
Proof can be checked in \ref{app:ratio_proof}
\end{proof}

\spara{Time complexity.} The basic algorithm takes $O(|V|(|V|+|E|)+b|V|^2)$. The time complexity of each component is as follows. 
\begin{itemize}[leftmargin=*]
\item For initialization, it takes $O(|V|(|V|+|E|))$ to compute all {\SEG}s. 
    \item For each iteration, it is required to compute the set difference $|V|$ times. Since the set difference takes $O(|V|)$ time complexity, it takes $O(b|V|^2)$. 
\end{itemize}

\spara{Limitation of {\BA}.}
We point out three major limitations of {\BA}: (1) \textit{Memory consumption} : it requires considerable memory to store all  {\SEG}s. It takes $O(|V|^2)$ memory space in the worst case. (2) \textit{Initialization bottleneck} : it requires computing all  {\SEG}s in the initial stage, which requires $O(|V|(|V|+|E|))$ time complexity. (3) \textit{Update bottleneck} : at every iteration, we need to check all {\SEG}s to find the best seed node.

In the following sections, we propose an efficient method to address \textcolor{black}{the above} three issues. Specifically, {\ERA} focuses on resolving the update bottleneck issue of the {\BA} while preserving  \textcolor{black}{its} effectiveness. In section~\ref{sec:fast}, we solve these issues simultaneously without
losing accuracy.

\subsection{Effective {\rnbr}-based approximation algorithm}

In section~\ref{sec:ba}, we discuss the characteristics of the $\rho$ function and approximation ratio of {\BA}. 
Even if {\BA} is effective (a.k.a. $1-1/e$), it cannot sufficiently handle large datasets. 
Therefore, in this section, we propose a pruning technique called the {\rnbr}-based pruning strategy to improve efficiency (update bottleneck) while preserving  effectiveness. First, we define some terminology. 

\begin{definition}
(\underline{Effective $r$-neighbor size(\ernbr)}). \\ 
\textcolor{black}{
Given a graph $G=(V,E)$, a set of seed nodes $\mathcal{S}^i$ at current iteration $i$, node $v\in V$, minimum degree threshold $k$, and distance threshold $r$, effective $r$ neighbor size ({\ernbr}) of $v$, denoted as $\mathcal{E}_{r}(v)$, is defined as follows. 
%
\begin{align}
    \mathcal{E}_{r}(v) = 
        \begin{cases}  
        |N_{r}(v)| & \text{default value} \\
        \rho(v, S^j) & \text{if } \rho(v, S^j) \text{ where $j< i$ is already computed}
        \end{cases}
\end{align}
}
\end{definition}

\textcolor{black}{
We can consider that {\ernbr} implies the number of possible engageable nodes based on our observation. This {\ernbr} is helpful to prune a set of nodes as an upper bound.
}
For example, at the initial stage, all  $|N_{r}(v)|$  is the same with {\ernbr} because all nodes in $N_r(v)$ can be engaged \textcolor{black}{(without considering the degree constraint)}. When we compute $\rho(v, S)$, we update the {\ernbr} value of node $v$ to keep the recent value. 

Note that at the initial stage, we do not need to compute {\SEG}s of all nodes because it takes a long time.  We only compute the {\SEG} if it is required to be computed; in other words, we adopt the lazy update manner. Note that $|\mathcal{E}_{r}(v)|$ is always larger than or equal to $\rho(v, S)$, that is, it can be an upper bound of $\rho(v, S)$. Thus, we can design the following pruning strategy.

\begin{algorithm}[ht]
\SetKw{break}{break}
\SetKw{return}{return}
\SetKw{next}{next}
\SetKw{true}{true}
\SetKw{AND}{AND}
\SetKw{null}{null}
\SetKwData{C}{C}
\SetKwData{U}{U}
\SetKwFunction{SEG}{SEG}
\SetKwFunction{sortBySize}{sortByValue}
\SetKwFunction{ernbra}{ernbr}
\SetKwInOut{Input}{input}
\SetKwInOut{Output}{output}
\Input{$G=(V,E)$, $k$, $r$, $b$}
\Output{A set of seed nodes $\mathcal{S}$}
$\mathcal{S} \leftarrow \varnothing$\;
$H \leftarrow \{\{v_1, |N_r(v_1)|\}, \{v_2, |N_r(v_2)|\}, \ldots, \{v_{|V|},|N_r(v_{|V|}|)\} \} $\;
$H \leftarrow $ \sortBySize{$H$}\tcp*{Sort by {\ernbr}}
\For{$|\mathcal{S}| \neq b$}{
    $Cur, \U \leftarrow \varnothing$\;
    \For{$u\in H.keys()$}{
        $\mathcal{E}_r(u) \leftarrow H.get(u)$ \;
        \If{$\rho(Cur, \mathcal{S}) \geq \mathcal{E}_r(u)$}{
           \break\;
        }
        \If{$\rho(Cur, \mathcal{S})$ $<$ $\rho(u, \mathcal{S}$)}{
           $Cur \leftarrow u$\;
        }
        $\mathcal{E}_r(u) \leftarrow$ $\rho(u, \mathcal{S})$\;
        $U \leftarrow U \cup \{u\}$\;
        
    }
    $\mathcal{K} \leftarrow \mathcal{K} \cup Cur$\;
    $H \leftarrow H\setminus Cur$\;
    Update order $U \setminus \{Cur\}$ of $H$ \tcp*{using binary search}
}
\return $\mathcal{K}$\; 
\caption{Pseudo description of {\ERA}} 
\label{alg:erpruning}
\end{algorithm}

Instead of computing all  {\SEG}s at the beginning of the algorithm, an effective {\rnbr}-based pruning iteratively computes a few {\SEG}s that are promising candidates for selection. The high-level idea is to avoid computing {\SEG}s using {\ernbr} because {\ernbr} of a specific node $v$ is an upper bound of  $\rho(v,S)$. 
The detailed procedure \textcolor{black}{is} described as follows: 

\begin{pDure}\label{pdure:pruning_ernbr}
\textcolor{black}{
At the initial stage, the nodes are ordered based on {\ernbr} in descending order (Lines 1-3), then, we compute {\SEG}s to iteratively obtain the  $\rho$ value (Lines 8-11). 
Note that we retain the current best node $Cur$, which has the largest $\rho$ value (Line 11).
As a result of checking the nodes iteratively, if the {\ernbr} of the current node is less than or equal to $\rho$ of $Cur$, we return $Cur$ as the selected seed node of our algorithm (Lines 8-9). After selecting the best node, we update the node order based on the {\ernbr} (Line 16). This process is repeated until we identify $b$ seed nodes (Lines 4-16). Finally, it returns a set of selected seed nodes as a result (Line 17).  
}
\end{pDure}

\begin{example}
Figure~\ref{fig:ernbr_example} depicts the procedure~\ref{pdure:pruning_ernbr}. The sample graph consists of seven nodes and ten edges. 
We set $b=2$, $k=2$, and $r=1$. First, we initialize {\ernbr}. Next, we sort the nodes according to  {\ernbr}. Then, we iteratively compute {\SEG}s in a descending order of {\ernbr}. In the case of node $c$, {\SEG} is $\{a, b, c\}$. 
When we compute {\SEG} of node $f$, it contains $\{d, e, f, g\}$. Because {\ernbr} of the next node $g$ is $4$, notice that all  $\rho$ values of the nodes to be computed later are less than $4$. This implies that we do not need to compute {\SEG}s of the other nodes. Hence, we select the {\SEG} of node $f$ as a solution and delete  node $f$ from the candidate list. 
Then, we sort the list based on the size of {\ernbr}. We repeatedly compute {\SEG}s. Next, we select the {\SEG} of node $c$. Because the solution size is exactly the same as the number of seed nodes $b$, we terminate the algorithm. 

Note that if the inverted index is maintained to retain the information regarding which {\SEG} contains a specific node $u$, we can improve the efficiency. However, This is not preferred, because it requires considerable memory space. 
\end{example}

Before we compute an {\SEG}, we must check whether the {\SEG} has already been computed or not. If computed, we do not need to recompute it because the {\SEG} of a node is unique.

\spara{Comparing with \cite{minoux1978accelerated}.} 
In \cite{minoux1978accelerated}, authors propose an accelerated greedy algorithm (\AGA) to find a solution when the objective function is submodular. It iteratively maintains the $\Delta(v)$ value by computing $\rho(v)-\rho(\emptyset), \forall v\in V$ then updates the $\Delta$ value. Note that the high-level idea of {\AGA} and {\ERA} is similar. The two major differences compared with {\AGA} is as follows: 
(1) {\ERA} does not directly compute the value $\rho(v)$ since computing the {\SEG} is time-consuming. Thus, we utilize $|N_r(v)|$ to improve the efficiency since $|N_r(v)|$ can be utilized as an upper bound; 
(2) For every iteration, {\SEG} utilize binary search to maintain the sorted values. Thus, it can improve the efficiency to find the node which has the largest gain.

\spara{Time complexity.} The time complexity of {\ERA} is the same as that of {\BA} since it needs to compute all {\SEG}s in the worst case. However, as discussed in section~\ref{sec:experiment}, we observe that {\ERA} is much faster than {\BA} in practice.

\begin{figure}[t]
\centering
\includegraphics[width=0.95\linewidth]{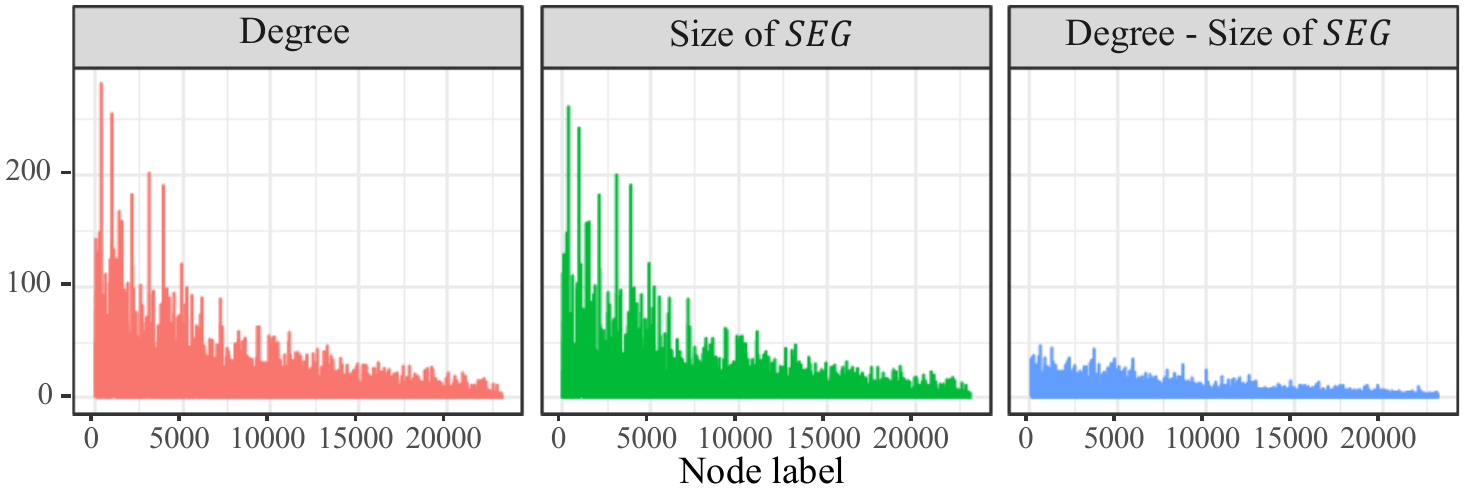}
\caption{Degree and the size of \textsf{SEG} in Condmat dataset}
\label{fig:cond_deg_krcc}
\end{figure}

\section{Fast Circle Algorithm}\label{sec:fast}
In section~\ref{sec:approx}, we discuss two algorithms for finding an approximate solution to the {\LUEM} problem. 
Despite improving the update bottleneck of {\BA}, {\ERA} intrinsically suffers from memory consumption and initialization bottlenecks.
Hence, in this section, we present an approach to improve the efficiency of {\ERA} by improving the abovementioned issues while preserving its effectiveness. 
Our {\FCA} algorithm is designed based on the assumption that a node with a large degree may engage many users.
In Figure~\ref{fig:cond_deg_krcc}, we present the degree and size of engageable users when $r=1$ and $k=3$ in the Condmat dataset~\cite{leskovec2007graph}.
Observe that the degree of a node is correlated with the {\SEG} size. In {\FCA}, we utilize this characteristic to find a solution for {\LUEM}. 

First, we introduce HyperANF~\cite{boldi2011hyperanf}, which is a technique used to approximate the neighborhood function. Next, we propose an efficient algorithm that incorporates HyperANF, called the Fast Circle Algorithm (\FCA). 

\begin{figure}[t]
\centering
\includegraphics[width=0.85\linewidth]{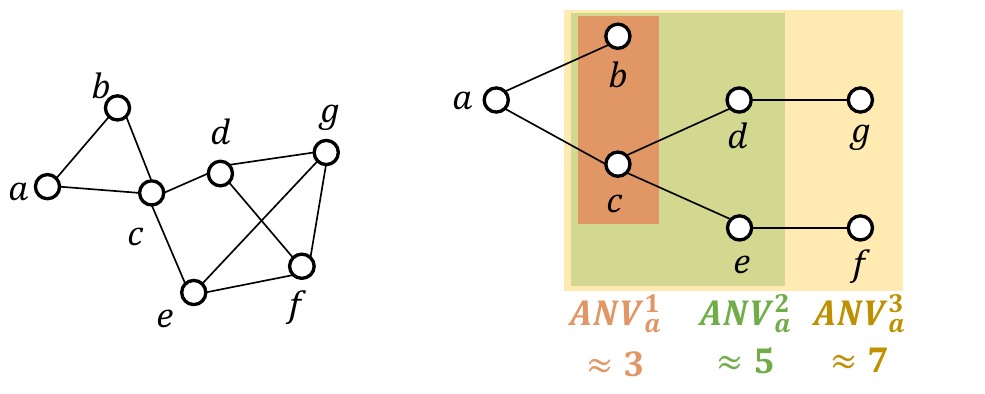}
\caption{Example of $ANV$}
\label{fig:anv_example}
\end{figure}

\begin{figure*}[t]
\centering
\includegraphics[width=0.7\linewidth]{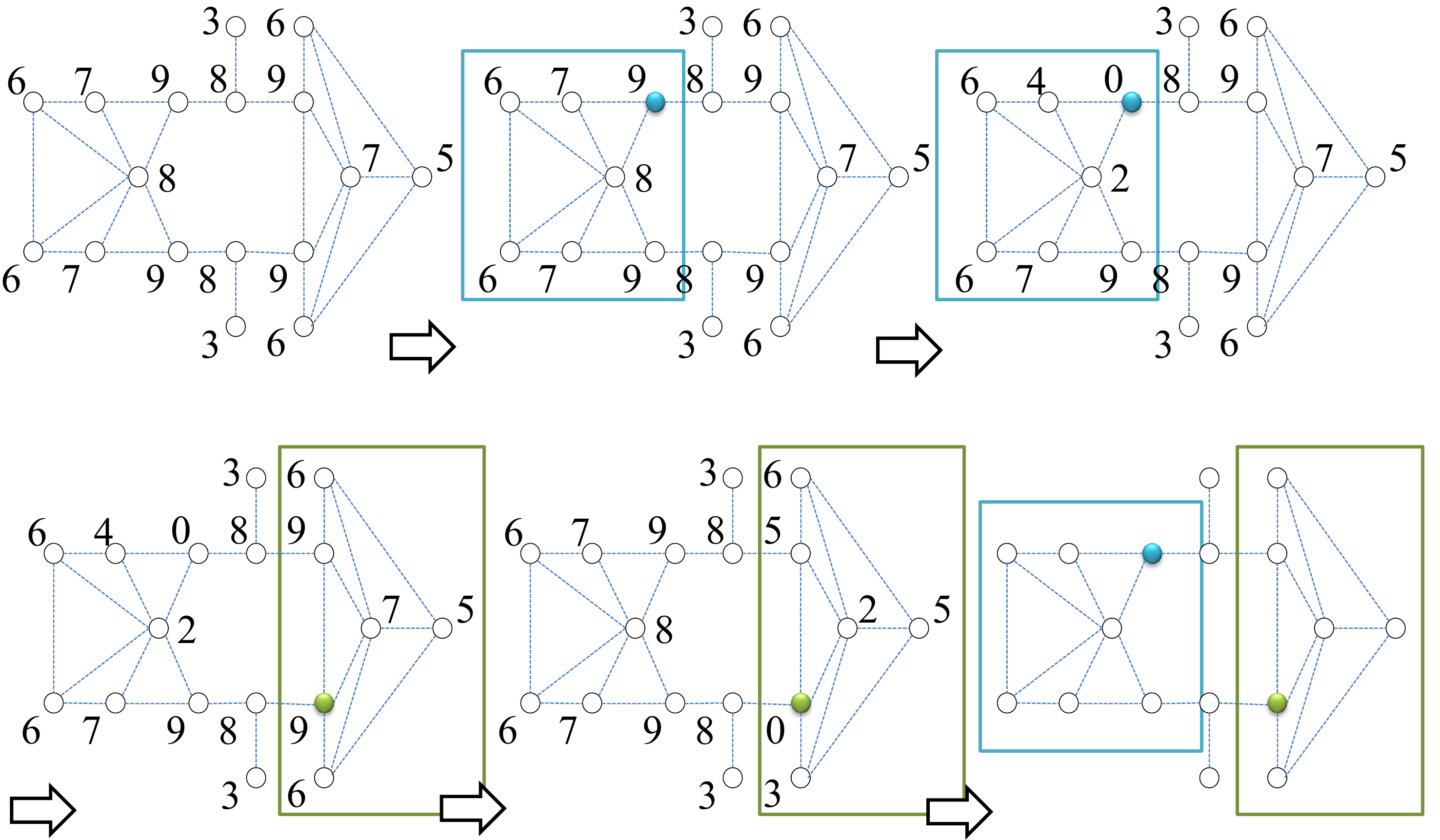}
\caption{Example of {\FCA} ($b=2$, $k=2$, $r=2$)}
\label{fig:fca_example}
\end{figure*}

\subsection{HyperANF}\label{sec:hyper}
HyperANF~\cite{boldi2011hyperanf} is a state-of-the-art algorithm to compute an approximation of the neighborhood function of a graph. 
It utilizes HyperLogLog counters~\cite{flajolet2007hyperloglog} which are statistical counters requiring $O(\log \log {n})$ bits. 
Thus, HyperANF can approximate the number of reachable nodes within a specific distance from a node. 
Unfortunately, the time complexity of the HyperANF is unknown but its running time  is expected to be approximately $O((|V|+|E|)h)$, where $h$ denotes the maximum distance. This is because  HyperANF is an extension of ANF~\cite{palmer2002anf} that requires $O((|V|+|E|)h)$. 

\subsection{Algorithm description}

By utilizing HyperANF, each node has a set of approximated neighborhood values ($ANV$s) for each distance, that is, given a radius $r$, each node has $1$-$ANV$, $2$-$ANV$, $\cdots$, $r$-$ANV$. First, we formally define the approximated neighborhood value ($ANV$) of a node. 

\begin{definition}
(\underline{Approximated neighborhood value ($ANV$)}). \\ Given a graph $G$, node $v\in V$, and threshold $r$, the approximated neighborhood value ($ANV$) is a key-value structure, in which a key is a set of integers from $1$ to $r$ and a value is an approximated number of neighbors from node $v$ within a specific distance. 
\end{definition}

\begin{example}
We illustrate the ANF value by reusing Figure~\ref{fig:twoKrcc}. We assume that $r=3$ and compute the $ANV$ of node $a$. Figure~\ref{fig:anv_example} shows the $ANF$ result of the node $a$. Note that $ANV$ does not return the exact degree because it is an approximated value. 
\end{example}

The intuition behind {\FCA} is as follows. We focus only on $ANV_v^{r}$ to select the best node. After finding node $u$ in the first iteration, we update the $ANV^{r}$ values of all nodes. If node $x$ is the $d$-hop neighbor of the selected node $u$, $ANV_{x}^{r}$ must be updated by negating $ANV_{x}^{r-d}$ because we assume that $ANV_{x}^{r-d}$ nodes are already engaged.

        \begin{algorithm}[ht]
        \SetKw{break}{break}
        \SetKw{return}{return}
        \SetKw{next}{next}
        \SetKw{true}{true}
        \SetKw{AND}{AND}
        \SetKwData{C}{C}
        \SetKwData{U}{U}
        \SetKwFunction{krcc}{krcc}
        \SetKwFunction{sortBySize}{sortByANV}
        \SetKwFunction{ef}{ef}
        \SetKwFunction{ernbra}{ernbr}
        \SetKwFunction{remove}{remove}
        \SetKwFunction{nextr}{next}
        \SetKwFunction{add}{add}
        \SetKwFunction{kcore}{kcore}
        \SetKwFunction{maxDistance}{maxDistance}
        \SetKwFunction{getDist}{getDist}
        \SetKwInOut{Input}{input}
        \SetKwInOut{Output}{output}
        \Input{$G=(V,E)$, $k$, $r$, $b$}
        \Output{A set of seed nodes $\mathcal{S}$}
        $D \leftarrow G[$\kcore{$G$, $k$}$]$, $\mathcal{S} \leftarrow \varnothing$\;
        initialize HyperANF on $D$\;
        \For{$r' \leftarrow 1:r$}{
            Setup $ANV^{r'}$  by computing HyperANF.\nextr{$r'$}\;
        }
        $D \leftarrow$ \sortBySize{$D$}\;
        \While{$|\mathcal{S}| \neq b$}{
            $\U \leftarrow \varnothing$\;
            $u \leftarrow \argmax_{u\in D}$ $ANV_u^{r}$\;
            compute $\mathcal{K}_{k,r}(u)$\;
            compute shortest path tree $T(u)$ using $G[\mathcal{K}_{k,r}(u)]$ with root $u$\;
            $d_T \leftarrow $ \maxDistance{$T(u)$}\;
            \For{$y \in \mathcal{K}_{k,r}(u)$}{
                Update the ANV values by Strategy\;
                \U.\add{$y$}\;
            }
            $D$.\remove{$u$}\;
            $\mathcal{S} \leftarrow$  \textcolor{black}{$\mathcal{S} \cup \{u\}$}\;
            update nodes $\U$ in $D$\;
        }
        \return $\mathcal{S}$\;
        \caption{Procedure of \textsf{FCA}} 
        \label{alg:FCA}
        \end{algorithm}

\begin{pDure}\label{pDure:FCA}
\textcolor{black}{
At the initial stage, given a distance threshold $r$, a set of $ANV$s ($ANV^1$ $\cdots$ $ANV^r$) for every node based on HyperANF is computed (Lines 1-4). Then, node $u$ that has the largest $ANV_u^r$ value is selected (Line 8). Intuitively, node $u$ have many $r$-neighbor nodes. 
Next, we compute {\SEG} $\mathcal{K}(u)$ and the shortest distances from node $u$; then, construct a shortest-path tree based on the shortest path (Lines 8-10). 
Next, if node $v$ is in the $i$-th level of the shortest-path tree of $\mathcal{K}(u)$ owing to rooting node $u$, we obtain $ANV_v^{r-i}$. Then, the $ANV$ values are updated as follows (Lines 11-14). 
\begin{itemize}
 \item $ANV_v^{r'} = ANV_v^{r'} - ANV_v^{r-i}, \forall r\geq r' > r-i$
 \item $ANV_v^{r''} = 0, \forall r'' \leq r-i$ 
\end{itemize}
This process is repeated for all nodes in $\mathcal{K}_{k,r}(u)$. After  \textcolor{black}{updating} $ANV$ value of a node, we relocate the position of the node to preserve the order in $D$ (Line 17). This procedure is repeated until the size of the solution is $b$.
}
\end{pDure}

\begin{example}
In Figure~\ref{fig:fca_example}, we give an example of {\FCA}. Suppose that $b=2$, $r=2$, and $k=2$. 
At the initial stage, we compute $ANV$s of all nodes. In the Figure, the numbers indicate the $ANV^r$ values of the nodes. 
Next, we choose the node that has the largest $ANV^r$ value. We choose a blue-coloured node in the first iteration. After choosing the node, we update the $ANV^r$ values of its neighbor nodes based on the procedure~\ref{pDure:FCA}. Among the updated $ANV^r$ values, we choose the green-coloured node that has the largest $ANV^r$ value. Then, we choose two blue and green nodes as the seed nodes. Consequently, the algorithm is terminated. Note that thirteen users are completely engaged. 
\end{example}

\spara{Time complexity.} {\FCA} takes $O(H + br|V| + b(|V| + |E|\log{|V|}))$. Time complexity of each component is as follows. 
\begin{itemize}[leftmargin=*]
 \item Computing $b$ {\SEG} takes $O(b(|V|+|E|))$
 \item Computing HyperANF\footnote{Note that the time complexity of HyperANF is unclear (See section~\ref{sec:hyper})} takes $O(H)$
 \item Updating $ANV^r$ takes $O(br|V|)$ 
 \item Computing the shortest path $O(b(|V| + |E|\log{|V|}))$
\end{itemize}

\section{Experiments}\label{sec:experiment}

We evaluated the proposed algorithms using several real-world networks. All experiments were conducted on Ubuntu 14.04 with 64GB memory and 2.50GHz Xeon CPU E5-4627 v4. For the implementation, we used the JgraphT library~\cite{jgrapht} and WebGraph~\cite{boldi2004webgraph}. Our code is publicly available~\footnote{\url{https://bit.ly/3GyC8wl} }. 
\textcolor{black}{
Since the dataset\textcolor{black}{s} are publicly available, we do not have any preprocessing. 
}

\spara{Dataset.} Table~\ref{tab:dataset} lists the basic statistics of real-world datasets. All datasets are publicly available.
$c^{max}$ and $d^{avg}$ denote the maximum core index and average degree, respectively.

\begin{table}[t]
\caption{Summary of the real-world datasets}
\label{tab:dataset}
\centering
\begin{tabular}{c|c|c|c|c}
\hline
 Name       & \textbf{\# nodes}     & \textbf{\# edges}   & \textbf{$c^{max}$} & \textbf{$d^{avg}$}      \\ \hline \hline
 Amazon\cite{yang2015defining}     & 334,863   & 925,872    & 6 & 5.52   \\ \hline
 Brightkite\cite{cho2011friendship} & 58,228    & 214,078   & 52  & 7.35      \\ \hline
 CondMat\cite{leskovec2007graph} & 23,133    & 93,497    & 25  & 8.08      \\ \hline
 DBLP\cite{yang2015defining}     & 317,080   & 1,049,866 & 113 & 6.62  \\ \hline
 Enron\cite{yang2015defining}     & 36,692   & 183,831 & 43 & 10.02   \\ \hline
 Hepth\cite{leskovec2007graph}   & 9,877     & 25,998    & 31  & 5.26     \\ \hline
 LA\cite{bao1, bao2}             & 500,597   & 1,462,501 & 120 & 5.84    \\ \hline
 NYC\cite{bao1, bao2}        & 715,605   & 2,552,603 & 157 & 7.13    \\ \hline
 Orkut\cite{yang2015defining}        & 3,072,441   & 117,185,083 & 253 & 76.28   \\ \hline
 Youtube\cite{yang2015defining}  & 1,134,890 & 2,987,624 & 51  & 5.27  \\ 
\hline \hline
\end{tabular}
\end{table}


\spara{\bf Algorithms.} As baseline algorithms, we used two graph centrality-based approaches\footnote{In Table~\ref{tab:centrality}, we show several centrality measures and notice that AC and CC \textcolor{black}{are} relatively scalable to handle large-sized datasets. Hence, we choose both centrality measures as the baseline algorithms.}, named AC(alpha centrality), and CC(clustering-coefficient). 
Both approaches compute the centrality measures; then, iteratively pick the node that has the largest centrality until $b$ seed nodes are identified. If {\SEG} of a selected node is null, the node is not selected. Owing to limited scalability, we only report the results of {\BA} in the scalability test. 

\begin{itemize}[leftmargin=*]
    \item Basic algorithm (\BA)
    \item Effective $r$-neighbors-based approximation algorithm (\ERA)
    \item Fast Circle Algorithm (\FCA) 
    \item Alpha Centrality-based approach (AC)
    \item Clustering Coefficient-based approach (CC)
\end{itemize}

\textcolor{black}{
\spara{Parameter setting.} In our experiments, we vary the values of three parameters $k$, $r$, and $b$. In all the experiments, we use values $r \in [2,3]$ since the larger values are not interesting since when $r$ becomes large, all the nodes are reachable from the seed nodes. We fix the value $b=10$ since the larger values may cover all the possible nodes (See Figure~\ref{fig:ama_btest}). For the value $k$, it controlled the cohesiveness level. To the best of our knowledge, many previous works on minimum-degree based cohesive subgraph discovery~\cite{kim2020densely,barbieri2015efficient,fang2017effective,fang2016effective} take the minimum degree as an input parameter and did not shed light on strategies for setting its value. Intuitively, when a user selects a large value $k$, we expect the {\SEG} might be more cohesive and smaller. Thus, we consider the value $k$ an additional degree of freedom available to users to specify the cohesiveness, and users can compare results with different parameters.
}

\begin{figure}[t!]
  \centering
  \begin{subfigure}[t]{.99\linewidth}
    \centering
    \includegraphics[width=0.99\linewidth]{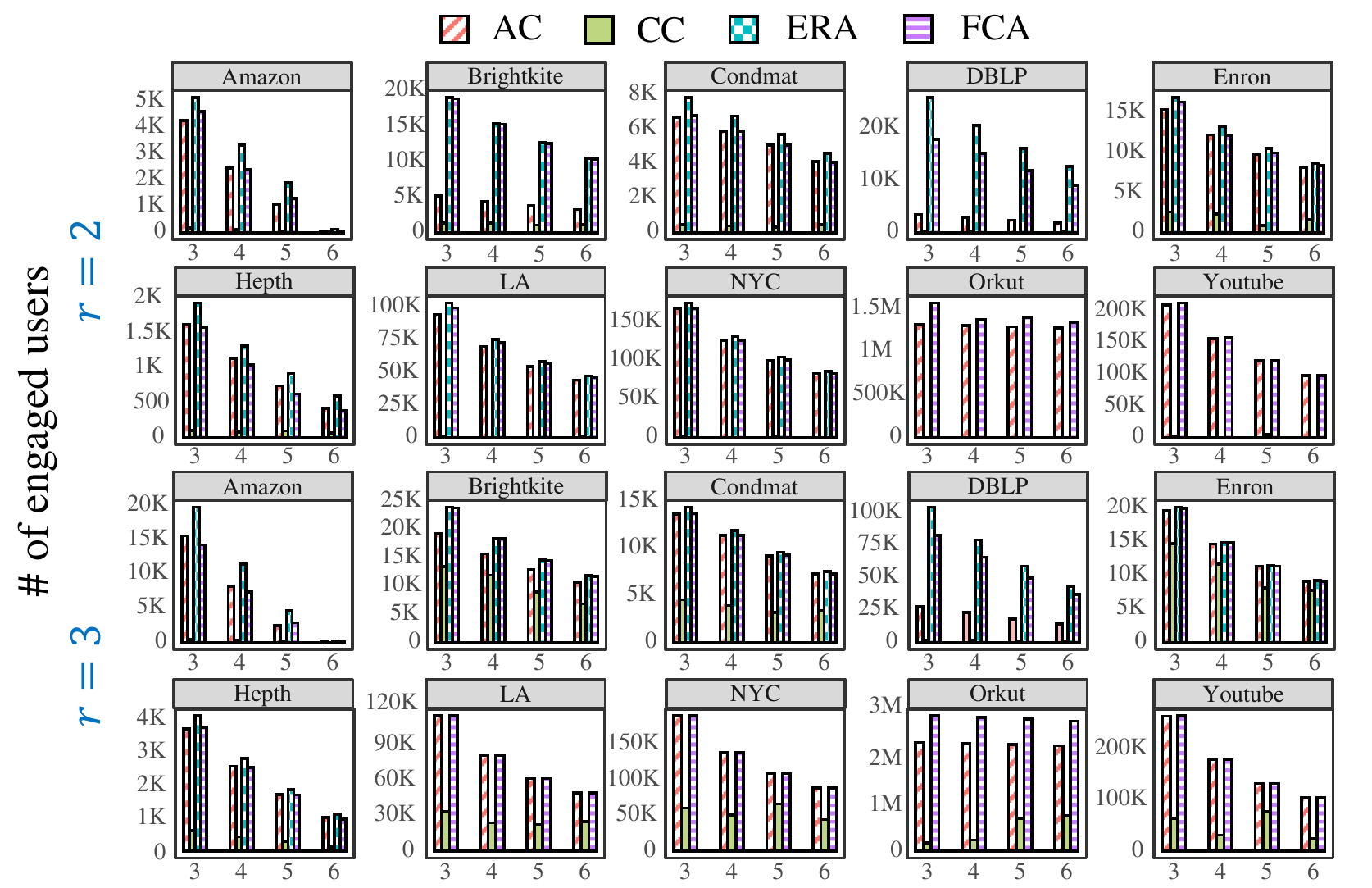}
    \vspace{-0.3cm}
    \caption{Effectiveness test}
    \label{fig:effectiveness_test}
  \end{subfigure}
  \begin{subfigure}[t]{.99\linewidth}
    \centering
\includegraphics[width=0.99\linewidth]{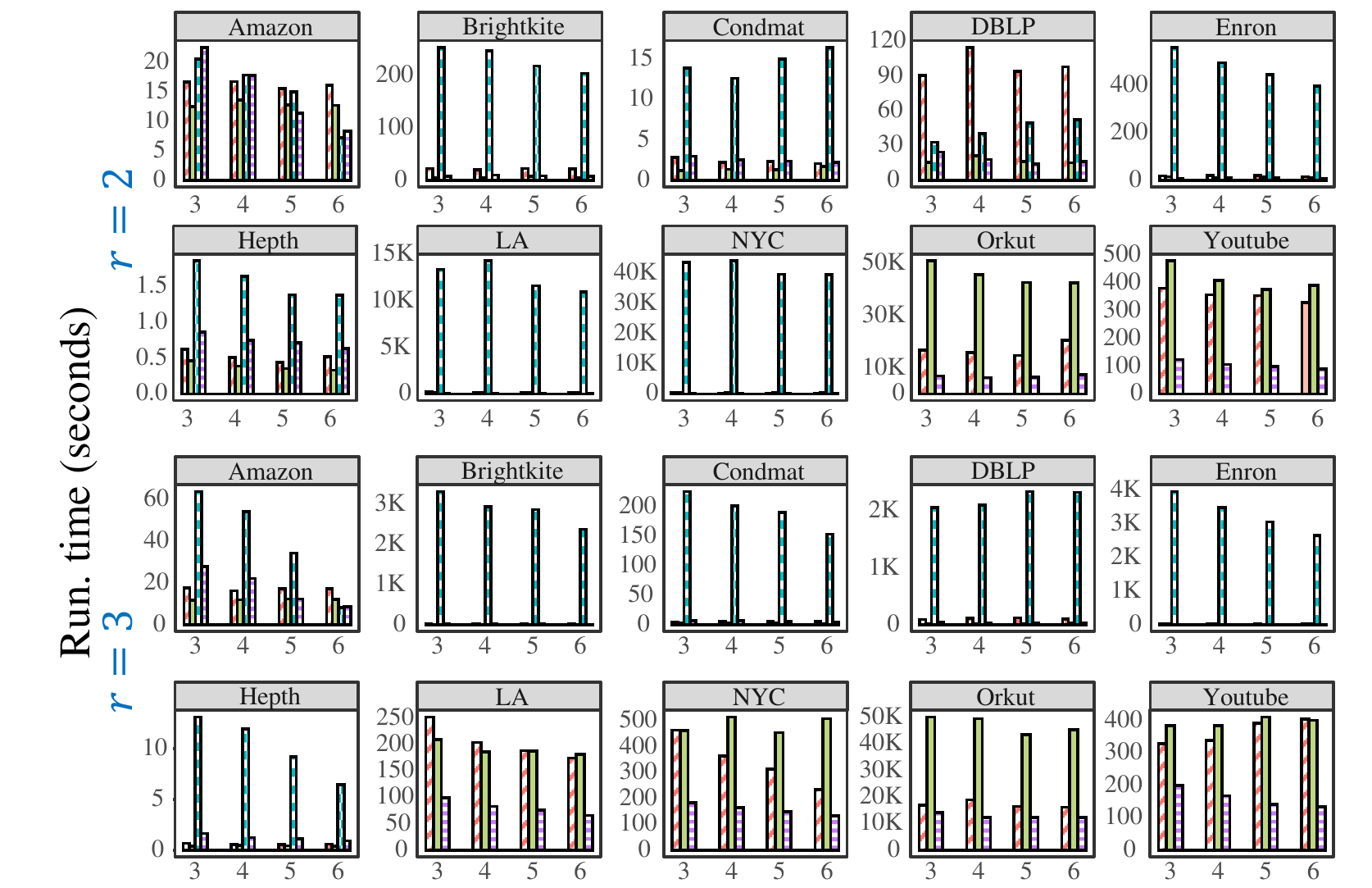}
\caption{Efficiency test (same legend with Figure~\ref{fig:effectiveness_test}) }
\label{fig:efficiency_test}
  \end{subfigure}
  \caption{Real-world networks}
  \label{fig:real-world_test}
\end{figure}

\spara{Effectiveness and efficiency evaluation in real-world networks.}
Figures~\ref{fig:effectiveness_test} and \ref{fig:efficiency_test} show effectiveness and efficient of the proposed algorithms on tests conducted using real-world datasets, respectively.
Each column indicates different datasets and each row indicates different $r$ values. Each Figure reports the number of engaged users and running time with respect to varying $k$ values. For all experiments, we set $b=10$.  
The effectiveness is verified in Figure~\ref{fig:effectiveness_test}. Observe that for all cases, {\ERA} algorithm returns the largest number of engaged users, and {\FCA} returns comparable effective results. The two baseline algorithms return relatively low-quality results.
In Figure~\ref{fig:efficiency_test}, we check the efficiency. 
Note that for large-sized datasets, such as LA, NYC, Orkut, and Youtube datasets with $r \geq 2$, {\ERA} algorithm does not finish within 24 hours. Therefore, we did not report these results. We observed that our {\FCA} outperformed the two baseline algorithms for all datasets. 
Next, we verified the efficiency of these algorithms. We observed that the proposed {\FCA} algorithm is much faster than the two baseline algorithms because it does not need to compute many {\SEG}s.

\spara{Measuring node influence.} In this experiment, we use the Polbooks network~\cite{nr} to verify the quality of our seed nodes  using node influence measures. 
The Polbooks network is a well-known network with $105$ nodes and $441$ edges. The nodes and edges represent blogs on U.S. politics and web links, respectively.
Because measuring the influence is time consuming, we used a small graph to check the tendency. 
In the experiments, we set $k=3$, $r=2$, and $b=3$. In addition, we used five measures to check the influence~\cite{salavaty2020integrated}: 
(1) ClusterRank~\cite{chen2013identifying} : It is a local ranking measure that considers not only the number of neighbor nodes and the neighbors' influences but also the clustering coefficient. 
(2) Hubness centrality~\cite{salavaty2020integrated} : It reflects the power of each node in its surrounding environment. 
(3) IVI score~\cite{salavaty2020integrated} :  IVI method is an integrative measure for determining influence nodes.  
(4) SIRIR~\cite{salavaty2020integrated} : It is an SIR-based influence ranking method that combines the leave-one-out cross-validation with a conventional susceptible-infected-recovered model.
(5) Spreading score~\cite{salavaty2020integrated} : It is indicative of the spreading potential.

\begin{table}[t]
\caption{Node influence}
\label{tab:influence}
\centering
\begin{tabular}{c|c|c|c}
\hline
\textit{Measures}  & Mean of all nodes & {\ERA} & {\FCA} \\ \hline \hline
clusterRank        & 48.0829           & \textbf{63.7308} & 63.7195 \\ \hline
hubness            & 28.5478           & \textbf{58.3913} & 56.9565 \\ \hline
IVI                & 12.6734           & 40.4316 & \textbf{60.693}  \\ \hline
SIRIR              & 1.4515            & 2.224 & \textbf{3.139}     \\ \hline
Spreading score    & 19.2765           & 48.6802 & \textbf{76.9068} \\ \hline \hline
\end{tabular}
\end{table}

Table~\ref{tab:influence} reports the experimental results. Observe that our identified seed nodes have larger influence scores than the average influence score of the nodes for all metrics. 
This indicates that the seed nodes play an important role in the graph structure. In addition, because {\FCA} returns high influence scores, large {\rnbr} is preferred when identifying influential nodes in networks.

\begin{table}[t]
\caption{Centrality score}
\label{tab:centrality}
\centering
\begin{tabular}{c|c|c|c}
\hline
\textbf{Algorithm} & Mean of all nodes & {\ERA}& {\FCA} \\ \hline \hline
AC & 1.096 & \textbf{1.1739} & 1.171 \\ \hline
BC & 104.2912 & 332.9455 & \textbf{488.5373} \\ \hline
CC & \textbf{0.4902} & 0.3479 & 0.3283 \\ \hline
Close & 0.3329 & 0.3716 & \textbf{0.407} \\ \hline
CN & 4.9702 & \textbf{6} & \textbf{6} \\ \hline
IM & 3.1193 & \textbf{5} & \textbf{5} \\ \hline
HC & 0.4015 & 0.4671 & \textbf{0.4888} \\ \hline \hline
\end{tabular}
\end{table}

\spara{Graph centrality.}
In this experiment, we reused the Polbooks network to verify the centrality score\footnote{Graph centrality is to measure the important nodes (or edges) in a graph.} of our seed nodes. 
Because our selected seed nodes might be located at the center of the core structure in a graph, we assume that the seed nodes might have a high centrality score. 
We checked the alpha centrality (AC)~\cite{bonacich2001eigenvector},  betweenness centrality(BC)~\cite{brandes2001faster}, clustering coefficient (CC), closeness centrality (Close)~\cite{bavelas1950communication}, coreness (CN) ~\cite{matula1983smallest}, influence maximization  (IM)~\cite{kempe2003maximizing}, and  harmonic centrality (HC)~\cite{newman2003structure,rochat2009closeness}.
Table~\ref{tab:centrality} reports the results of the centrality scores of our {\ERA} and {\FCA} algorithms and the average centrality scores of the nodes. Observe that {\ERA} has a larger value than the average centrality score. Moreover, CC only focuses on the local graph structure, and our seed nodes have relatively small scores.

\begin{table}[ht]
\caption{The number of engaged users}
\label{tab:engagedusers}
\centering
\begin{tabular}{c|c|c|c|c|c|c|c|c|c}
\hline
{\bf Algs}       & AC & BC & CC & Close & CN & HC & IM & {\ERA} & {\FCA} \\ \hline \hline
$\rho(.)$ & 56 & 78 & 31 & 73 & 49 & 56 & 26 & \textbf{103} & 102  \\ \hline \hline
\end{tabular}
\end{table}

In Table~\ref{tab:engagedusers}, we verify the number of engaged users when a set of nodes are selected based on centrality measures. Observe that our {\ERA} and {\FCA} algorithms outperform the other algorithms because they aim to maximize the number of engaged users. This indicates that maximizing the number of engaged users cannot be achieved by identifying classic node importance measures.

\begin{figure}[h]
\centering
\includegraphics[width=0.9\linewidth]{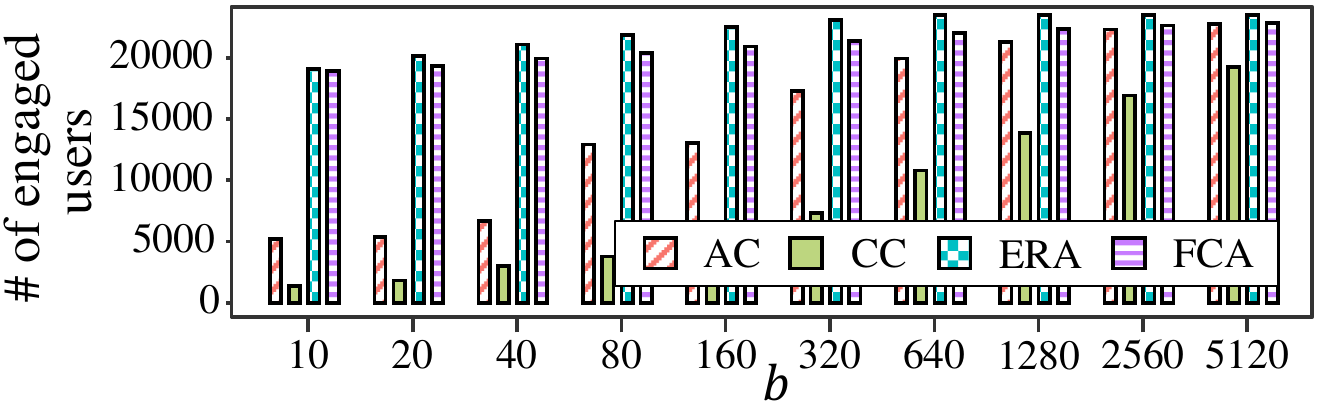}
\vspace{-0.2cm}
\caption{Varying parameter $b$}
\label{fig:ama_btest}
\end{figure}

\spara{Effect on $b$.}
To verify the effect on parameter $b$, we used the Brightkite dataset by setting $k=3, b=2$ and $r=2$, and reported the number of distinctly engaged users. 
In Figure~\ref{fig:ama_btest}, when $b$ value becomes relatively large, the difference between {\ERA} and {\FCA} algorithms increases. However, when $b$ value becomes very large (e.g., $5120$), the difference is negligible because  most nodes are already engaged.
Furthermore, we also observe that $AC$ and $CC$ return relatively few engaged users because the larger centrality scores do not indicate a larger number of engaged users.

\begin{figure}[h]
\centering
\includegraphics[width=0.8\linewidth]{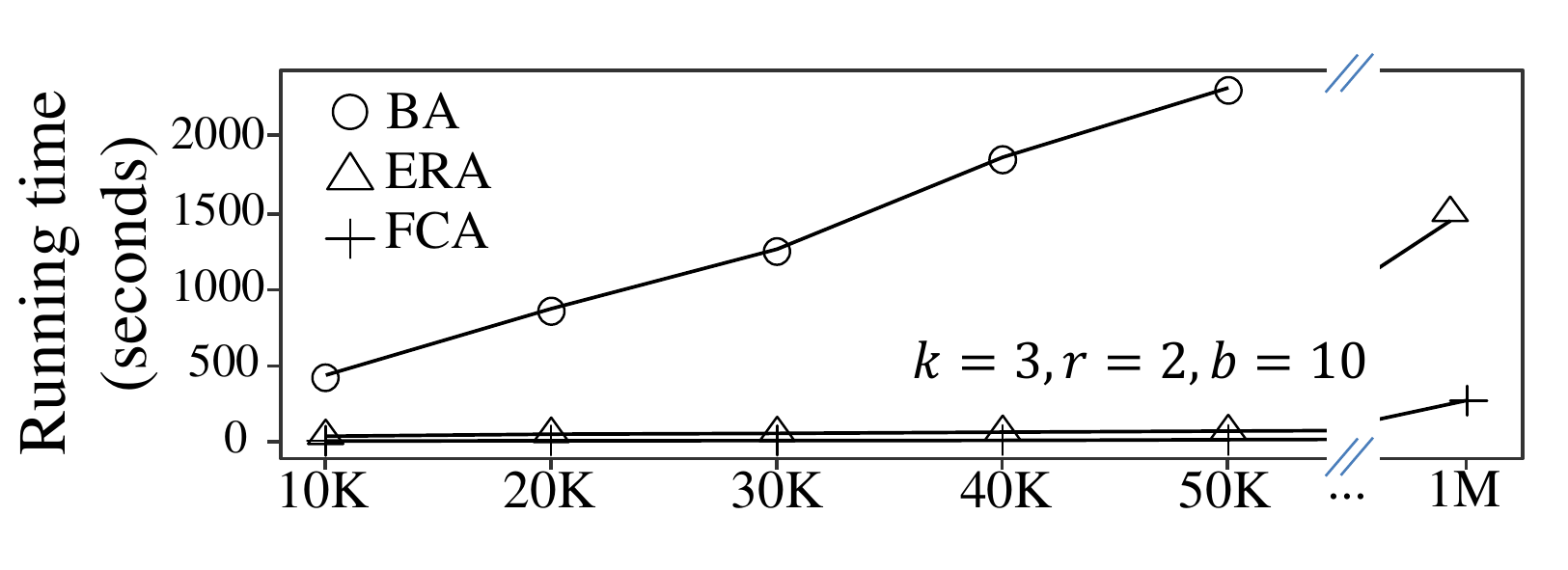}
\vspace{-0.2cm}
\caption{Scalability test on synthetic networks}
\label{fig:scalability}
\end{figure}

\begin{figure}
  \begin{subfigure}{.33\linewidth}
    \centering
    \includegraphics[width = \linewidth]{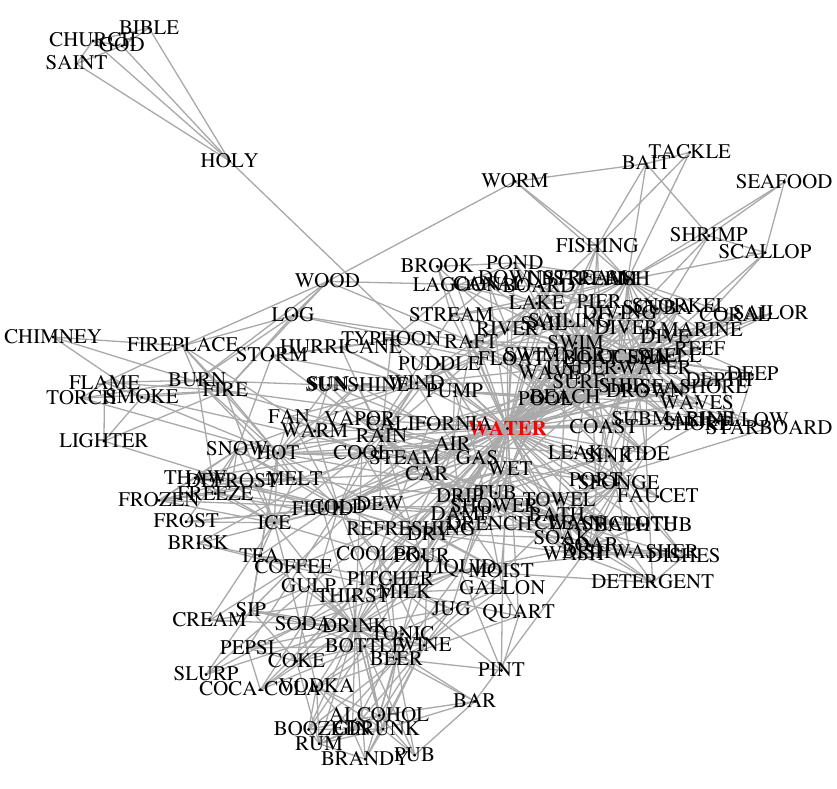}
    \caption{First {\SEG}}
  \end{subfigure}%
  \begin{subfigure}{.33\linewidth}
    \centering
    \includegraphics[width = \linewidth]{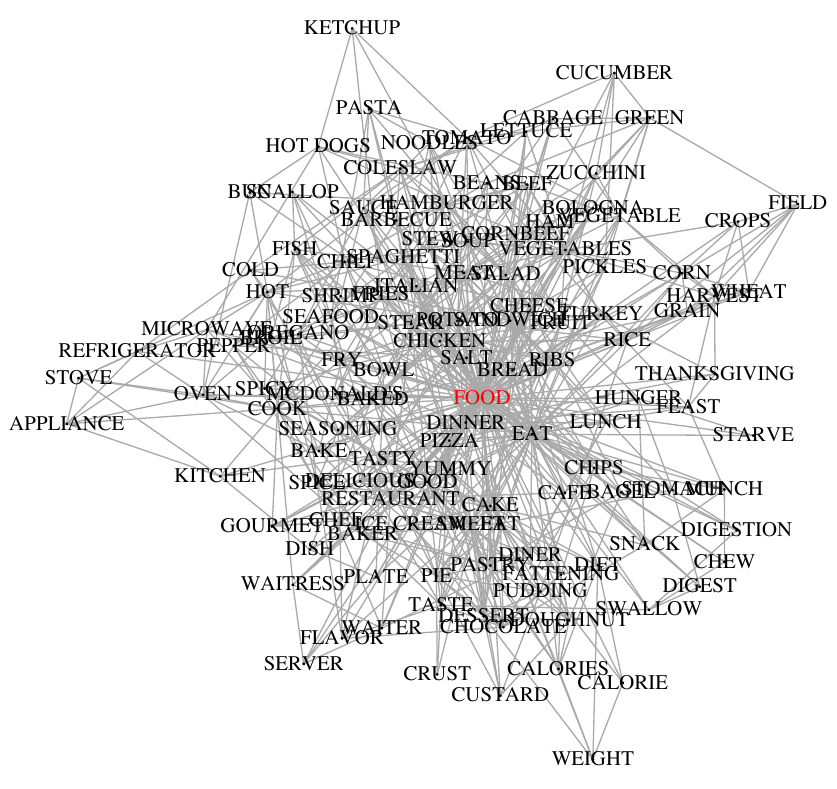}
    \caption{Second {\SEG}}
  \end{subfigure}%
  \begin{subfigure}{.33\linewidth}
    \centering
    \includegraphics[width = \linewidth]{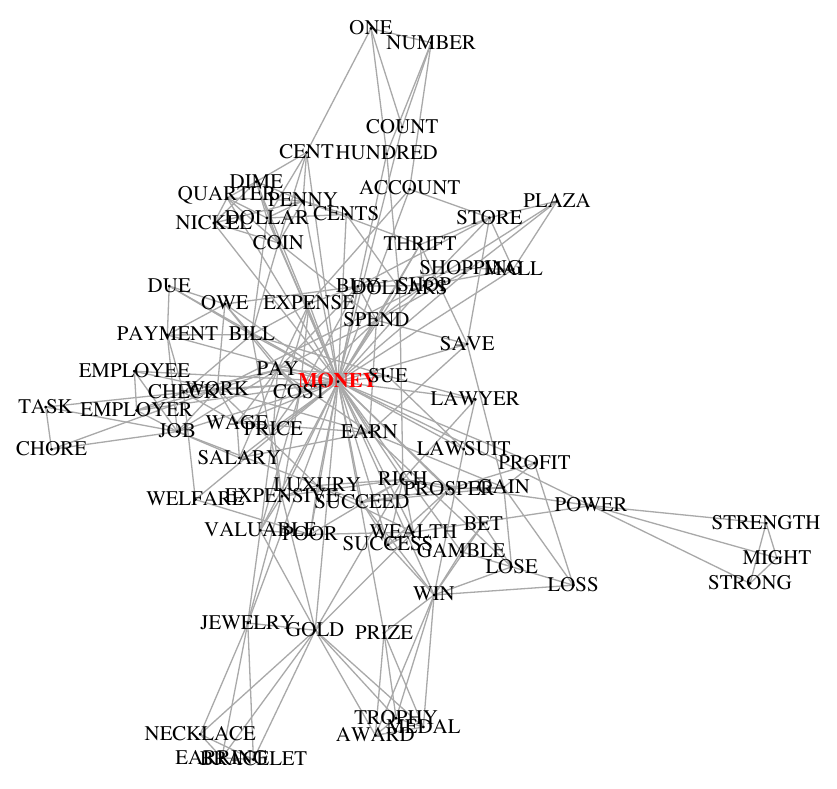}
    \caption{Third {\SEG}}
  \end{subfigure}
  \caption{Result of {\LUEM} in the word-association network ($k=3, r=2, b=3$)}
  \label{fig:case_word}
\end{figure}

\spara{Scalability test.} Figure~\ref{fig:scalability} demonstrate the efficiency of our algorithms with respect to \textcolor{black}{a} varying number of nodes in the LFR synthetic networks~\cite{lancichinetti2008benchmark} to present scalability. 
Notice that the running time of {\ERA} increases  significantly when the number of nodes increases. 
Moreover, observe that the running time of {\FCA} is much faster than that of {\ERA} and increases almost linearly with node size.


\spara{Case study : Word-association network}
Figure~\ref{fig:case_word} depicts the three resultant {\SEG}s in the word-association network~\cite{nelson2004university}. The word-association network consists of $10,617$ nodes and $72,168$ edges. Each node indicates an English word, and each edge indicates the association between two nodes. Please refer. To avoid meaningless association, we retain the edges if the edge weight (occurrence) is greater than or equal to $10$. By utilizing {\LUEM} with $b=3$, we identified three seed nodes: \textit{Water}, \textit{Food}, and \textit{Money}. We deduced that the three seed nodes are keywords that represent high-level concepts (topic) to explain the words. For example, in the first {\SEG}, many keywords such as Coast, Leak, Sink, Wet, Port, Gas, Vapor, Pump, and Puddle are related to the seed node \textit{Water}. In the second {\SEG}, we noticed that most of the keywords, such as Rib, Bread, Pizza, Pastry, Chips, and Spaghetti are related to the keyword \textit{Food}.
Notice that each {\SEG} is a cluster of similar keywords.

\begin{figure}[t]
\centering
\includegraphics[width=0.99\linewidth]{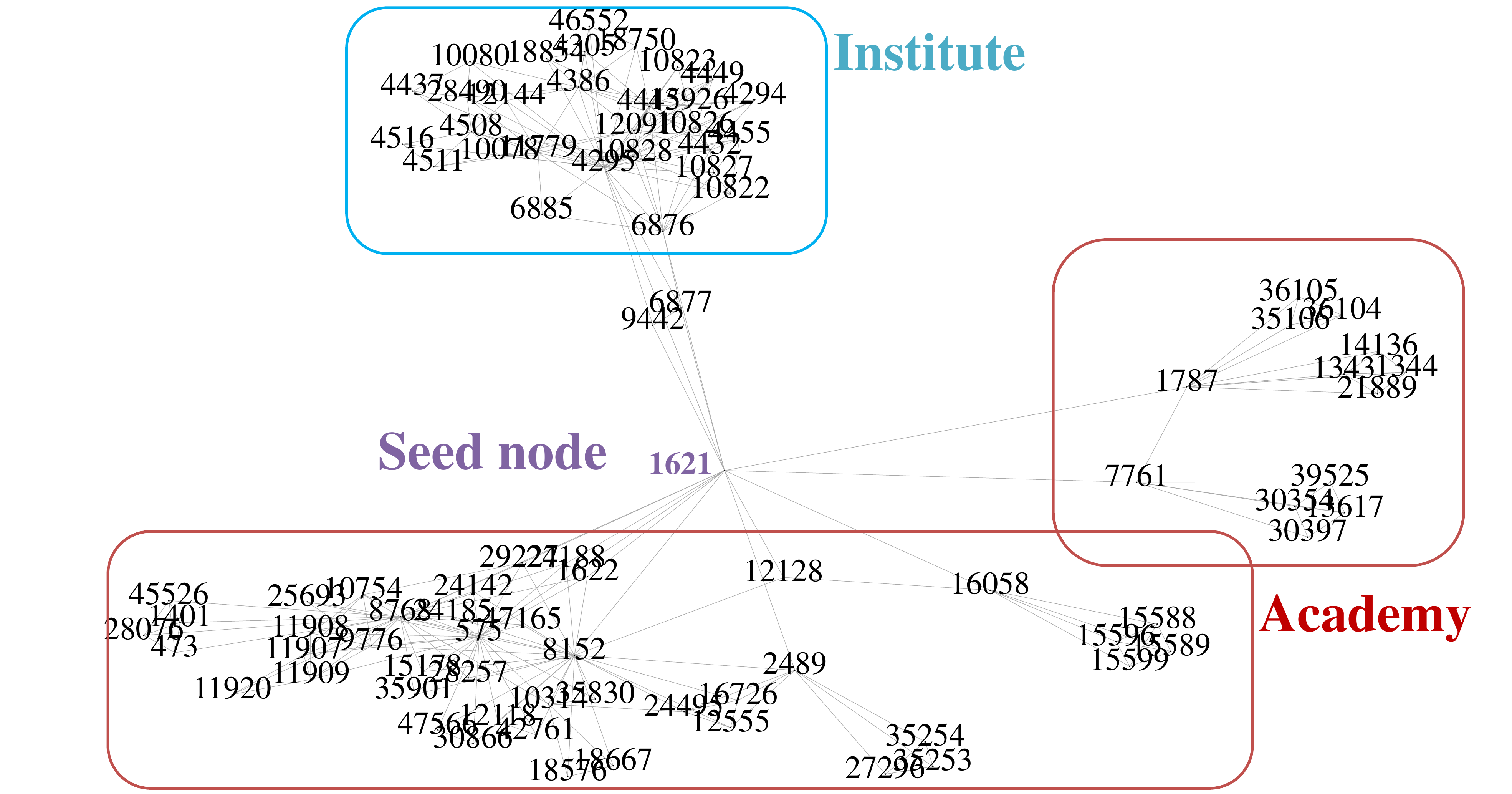}
\vspace{-0.3cm}
\caption{DBLP result when $k=3$}
\vspace{-0.1cm}
\label{fig:case}
\end{figure}
\spara{Case study : DBLP.}
We use DBLP dataset~\cite{kim2014link} for our case study. We generate a co-authorship network where an edge of two authors is generated if they publish at least $3$ papers together. We set $k=3$ and $r=2$. 
Figure~\ref{fig:case} shows the result of the case study. We notice that the author $1621$ is the center of {\SEG} and all the nodes are reachable from $1621$ within $2$ hops. We observe that there are two major sub-communities : (1) two research groups of Academy; (2) a research group of an institute. 

\begin{figure}[h]
\centering
\includegraphics[width=0.9\linewidth]{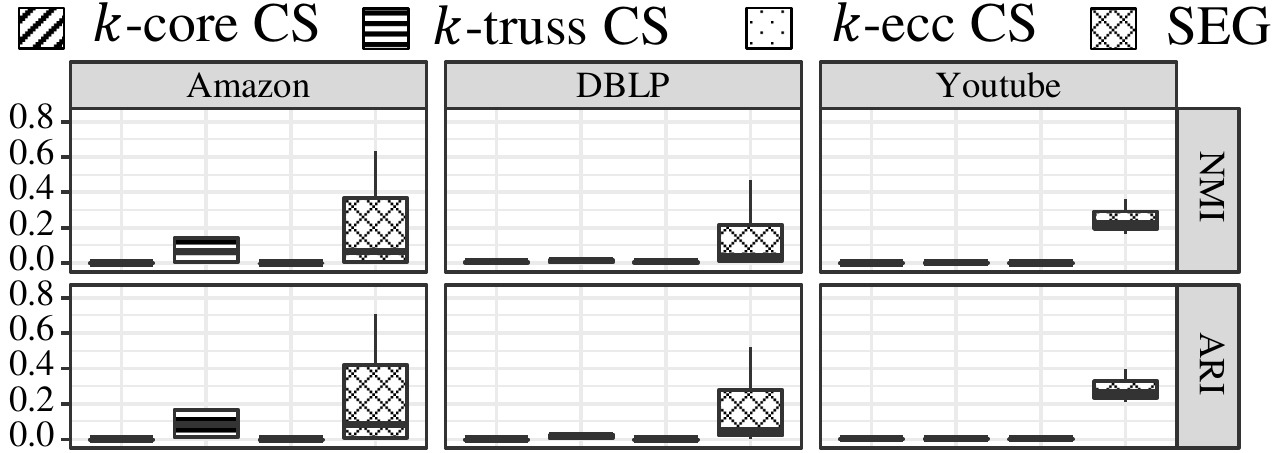}
\caption{Community search experiment}
\vspace{-0.2cm}
\label{fig:cs}
\end{figure}

\spara{Case study : Community Search.}
In this experiment, we demonstrate the results when our {\LUEM} approach is utilized on the community search problem~\cite{sozio2010community,fang2020survey}. 
We use three representative community search models as baseline algorithms : 
\textcolor{black}{
the $k$-core model~\cite{sozio2010community}, $k$-truss model~\cite{huang2014querying}, and $k$-ecc model~\cite{chang2015index} among many community search models~\cite{barbieri2015efficient,huang2014querying,dmcs,wu2015robust,huang2015} since the implementation of the algorithms is publicly available~\cite{fang2020survey}.
}
To verify the accuracy, we use two representative metrics for the community detection problem: NMI(Normalized Mutual Information)~\cite{danon2005comparing} and ARI(Adjusted Rand Index)~\cite{hubert1985comparing}.
Because both measures are designed to identify the best partitions (communities), we consider the community search problem as a binary classification problem to utilize both measures. We use three datasets (Amazon, DBLP, and Youtube~\cite{yang2015defining}) that are reported in Table~\ref{tab:dataset}, which have ground-truth communities. 
Because the ground-truth communities overlap, we compare the identified community with all ground-truth communities containing the query node; then, select the best accuracy value. For each experiment, we randomly picked $10$ query nodes, set $k=3$ and $r=1$, and reported the average and standard deviation. 
Figure~\ref{fig:cs} reports the results of {\LUEM} and two baseline algorithms. Observe that in the Amazon dataset, our model returns a better result than that of the $k$-core model and a result comparable to that of the $k$-truss model. In DBLP and Youtube datasets, observe that our model achieves better accuracy than those of the $k$-core, $k$-truss, and $k$-ecc models. This is because the size of ground-truth communities is relatively small and they normally have a small diameter.

\begin{figure}[h]
\centering
\includegraphics[width=0.95\linewidth]{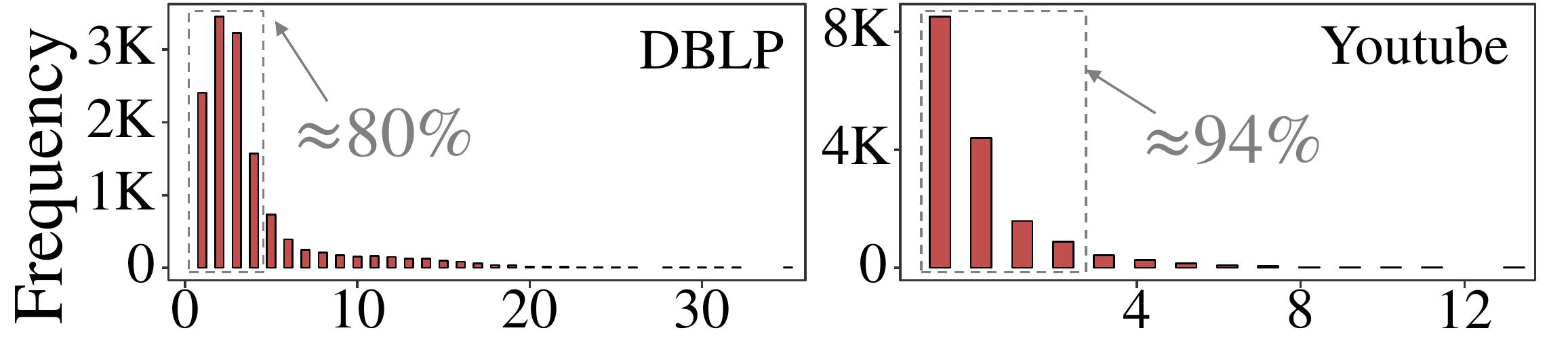}
\vspace{-0.2cm}
\caption{Diameter distribution}
\label{fig:diam}
\end{figure}

Figure~\ref{fig:diam} depicts the   diameter distribution of ground-truth communities in the DBLP and Youtube datasets. Notice that approximately 80\% of communities in DBLP dataset and 95\% of communities in Youtube had a diameter of less than or equal to 5. 
Because our community model is a minimum-degree with \textcolor{black}{a} diameter-bounded community model, we can obtain a better accuracy score if the size of the ground-truth communities are very small.

\section{Related Work}\label{sec:relatedwork}

\subsection{Influence maximization}

The influence maximization (IM) problem~\cite{li2018influence}, which is a key problem in social network analysis, has recently gained much attention owing to its potential value. Given a graph $G=(V,E)$, the influence maximization problem aims to find a set of $k$ seed nodes while maximizing the expected number of users influenced by the seed nodes. A representative application of IM is viral marketing~\cite{chen2010scalable,huang2019community}.
Kempe et al.~\cite{kempe2003maximizing} first modeled the influence maximization  problem for two fundamental information diffusion models: (1) an independent cascade model and (2) a linear threshold model.
Most existing IM problems can be solved by applying the greedy hill-climbing algorithm. Because the objective function of the IM problem is non-negative monotone submodular, if the best node to maximize the number of expected influenced users \textcolor{black}{are} chosen, a $1-\frac{1}{e}$ approximation ratio is achieved.  

\subsection{Anchored and collapsed $k$-core}

The anchored $k$-core problem was first proposed by Bhawalkar et al.~\cite{bhawalkar2012preventing}. Given a graph $G$, integer $k$, and budget $b$, the anchored $k$-core problem involves identifying $b$ anchor nodes to maximize the number of engaged users. Note that $b$ anchor nodes are fixed to be engaged, even if they do not have sufficient neighbor nodes to be engaged. In \cite{bhawalkar2012preventing}, the authors showed that  when $k=2$, polynomial-time algorithms could be designed. When $k\geq 3$, the inapproximability results were proved. Zhang et al.~\cite{zhang2017olak} proposed the OLAK algorithm for the anchored $k$-core problem. They proposed an onion layer structure to significantly reduce the search space for this problem. 
They also proposed early termination and pruning techniques to improve efficiency. Linghu et al.\cite{linghu2020global} proposed an anchored coreness problem. Instead of maximizing the number of engaged users, they focused on the coreness gain by anchoring the $b$ nodes. They presented that the anchored coreness problem was also NP-hard and proposed a pruning search space technique and method to reuse the intermediate results to improve efficiency.
Cai et al. \cite{cai2020anchored} proposed an  attributed community engagement problem that considers the attributes of users and community cohesiveness. 
They aimed to identify $l$ anchored users that can induce a maximal expanded community, namely, Anchored Vertex set Exploration (AVE) problem. They demonstrated that, when $k\geq 3$, the AVE problem was NP-hard. To solve this problem, they proposed the filter-verify algorithm with early termination and pruning techniques. In \cite{laishram2020residual}, the authors proposed residual core maximization, an algorithm for the anchored $k$-core problem. They selected anchored nodes based on two strategies: residual degree and anchor score. Moreover, they showed that the result is close to the optimal solution.

The collapsed $k$-core problem is firstly proposed by Zhang et al.~\cite{zhang2017finding}. They focused on the problem of finding $b$ collapsers to minimize the number of engaged users in a social network when removing $b$ collapsers. The identified $b$ anchor nodes can be considered as important users for maintaining their engagement in a network. Luo et al.~\cite{luo2021parameterized} proved that when $k\geq 3$, the collapsed $k$-core problem was W[P] hard. Zhang et al.~\cite{zhang2018finding} introduced a collapsed $k$-truss problem.

\subsection{$k$-core Decomposition and Its Variation}

The $k$-core is widely used for finding cohesive subgraphs in a graph. The definition of the $k$-core~\cite{seidman1983network} is as follows. Given a graph $G=(V,E)$ and integer $k$, the $k$-core, denoted by $D_k$, is a set of nodes of which every node has at least $k$ neighbor nodes in $D_k$. This $k$-core is unique and has containment relationship, i.e., $D_k \subseteq D_{k-1}$ when $k-1\geq 1$. The core index of a node is $k$ if it belongs to the $k$-core but not to $(k+1)$-core. Note that $k$-core is a set of nodes of which its core index is larger than or equal to $k$. 
Batagelj et al.~\cite{batagelj2003m} propose an exact and efficient algorithm to find the core index. Sariyuce et al.~\cite{sariyuce2016incremental} focuses on  incremental $k$-core problem in a dynamic graph. 
Bonchi et al.~\cite{bonchi2019distance} formulate distance-generalized $k$-core named $(k,h)$-core. 
Zhang et al.~\cite{zhang2020exploring} formulate $(k,p)$-core by considering the degree ratio in $k$-core.
Zhang et al.~\cite{zhang2018discovering} formulate $(k,s)$-core by unifying $k$-truss and $k$-core. 
There are several variations of $k$-core in signed networks~\cite{giatsidis2014quantifying}, directed networks~\cite{giatsidis2013d}, weighted graph~\cite{eidsaa2013s,galimberti2017core}, temporal graph~\cite{wu2015core}, multi-layer graph~\cite{galimberti2017core}, bipartite graph~\cite{ding2017efficient}, and uncertain graph~\cite{bonchi2014core}. 
To get more details, please refer to this nice survey paper~\cite{malliaros2020core}.

\section{Conclusion}

In this study, we formulate a novel problem called \underline{L}ocal \underline{U}ser \underline{E}ngagement \underline{M}aximization (\LUEM) by integrating the local user engagement with graph distance. 
We prove that the {\LUEM} problem is NP-hard and its objective function is monotonic submodular.
To solve this problem, we propose two approximation algorithms and an efficient heuristic algorithm that preserves effectiveness. 
To demonstrate the superiority of our algorithms, we conducted extensive experiments using real-world and synthetic networks. 
As a future research direction, we will consider a dynamic environment to find $b$ seed nodes. For example, the graph can be changed dynamically, or the end user may change the parameters online.

\appendix

\section{Efficiency and effectiveness test when $r=1$}\label{app:r1}

\begin{figure}[ht]
\centering
\includegraphics[width=0.99\linewidth]{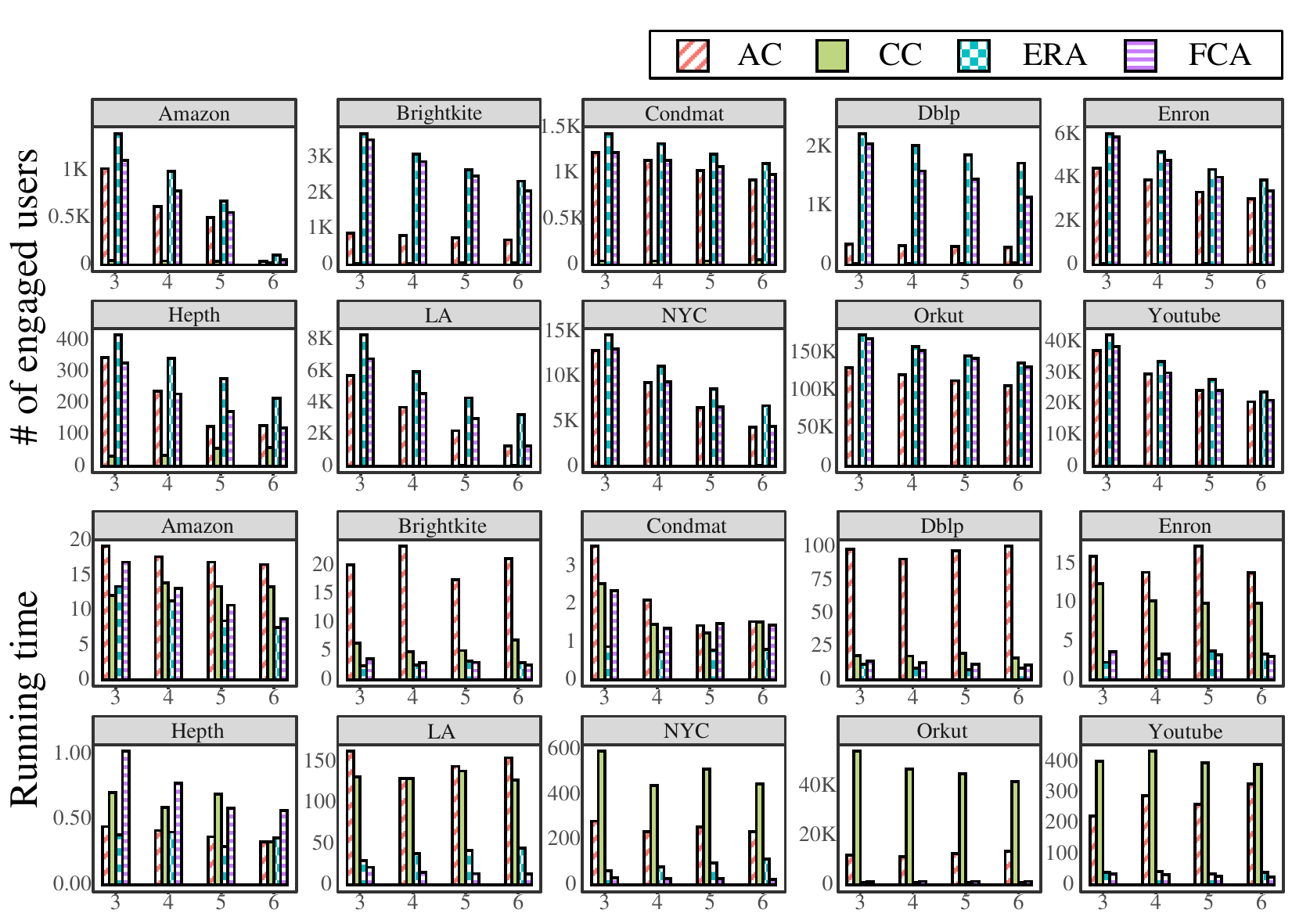}
\vspace{-0.4cm}
\caption{$r=1$}
\vspace{-0.2cm}
\label{fig:r1}
\end{figure}
Figure~\ref{fig:r1} shows effectiveness and efficiency results for $r=1$. Observe that the results have trends similar to those shown in  Figures~\ref{fig:effectiveness_test} and Figures~\ref{fig:efficiency_test}. 
One remarkable difference is that our {\FCA} is slightly slower than the baseline algorithms because it requires some initialization steps for approximation.  When $r\geq 2$, the initialization time is not significantly affected.

\begin{figure}[t]
\centering
\includegraphics[width=0.99\linewidth]{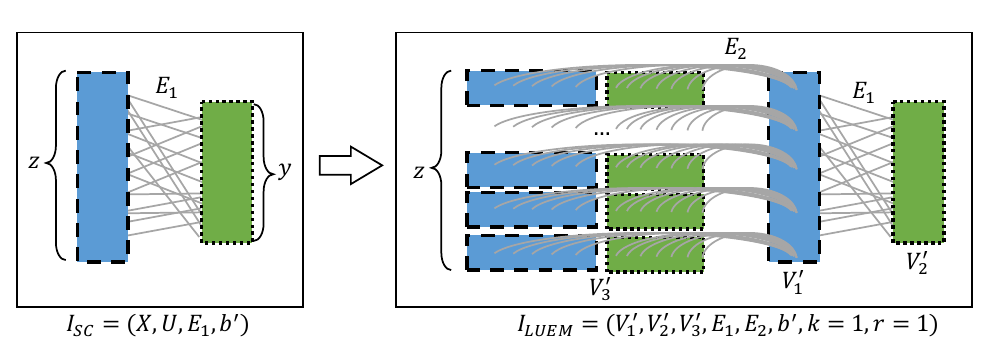}
\vspace{-0.25cm}
\caption{Reduction procedure}
\label{fig:np}
\end{figure}

\section{Proof of Theorem~\ref{theorem:NP}}\label{app:proof}

\textcolor{black}{
To show the NP-hard of a specific problem, it is required to show an example of \textcolor{black}{an} NP-hard problem that can be reduced to the problem that we want to verify the hardness. 
To show the hardness of {\LUEM} problem, we utilize the set-cover problem~\cite{bernhard2008combinatorial}. %
The set-cover problem is defined as follows: given a set of elements $U=\{u_1, u_2, \cdots, u_y\}$, $z$ subsets $X=\{X_1, X_2, \cdots X_z\}$ of $U$, a set of assignments $E_1$ from $X$ to $U$ where $X_i\in X$ contains the element $u_j\in U$, and a parameter $b' > 0$, it returns YES if there exists a set of $b'$ subsets whose union is the same with $U$. Otherwise, it returns NO. 
To present a reduction, we suppose that we have a solution of {\LUEM}. 
Next, we start with an arbitrary instance of the set-cover problem, and show that our {\LUEM} problem can be utilized to solve the set-cover problem.
Let suppose that we have an instance of the set-cover problem $I_{SC}=(X,U,E_1,b')$. By utilizing the instance $I_{SC}$, we can construct a new instance $I_{\LUEM}$. $I_{\LUEM}$ contains three key elements: (1) $V_1'$ is the same with $X$; (2) $V_2'$ is the same with $U$; and (3) $V_3'$ is $z$ sets of $(y+z)$ entities. A subset $X_j$ is connected to the element $i$ if $X_j$ contains the element $i$. We denote this relationship as $E_1$. A subset $X_j$ is connected to the $j$-th $(y+z)$ entities in $V_3'$. We set $b=b'$, $k=1$, and $r=1$. The newly generated instance and graph $G'$ can be checked in Figure~\ref{fig:np}. 
}

\textcolor{black}{
For every entity in $V_2'$, it can have at most $z$ degree since the number of entities in $V_1'$ is $z$. All the entities in $V_3'$ has exactly one neighbor entity in $V_1'$. Interestingly, a node in $V_1'$ has at least $z+y$ degree since it is connected to the nodes in $V_3'$. Since $k=1$ and $r=1$, the size of {\SEG} of node $v$ is the size of the ego-network of node $v$ in $G'$.
}

\section{Proof of Lemma~\ref{lemma:ineq}}\label{app:lemma_proof}

We denote $OPT \setminus \mathcal{S} = \{v_1, v_2, \ldots, v_l\}$  
where $l \leq b$. We then easily notice that

\textcolor{black}{
\begin{align}
    \rho&(OPT) \leq \rho(OPT \cup \mathcal{S}) \nonumber \\
             &= \rho(\mathcal{S}) + \sum_{j=1}^{l} [\rho(\mathcal{S} \cup \{v_1, \ldots, v_j\} ) - \rho(\mathcal{S} \cup \{v_1,\ldots, v_{j-1}\} )   ] \nonumber   \\ 
             & \leq \rho(\mathcal{S}) + \sum_{j=1}^{l} [\rho(\mathcal{S}\cup \{v_j\})  - \rho(\mathcal{S})]   \\
             & \leq \rho(\mathcal{S}) + \sum_{j=1}^{l} \max_{x\in V} [\rho(\mathcal{S}\cup \{x\})  - \rho(\mathcal{S})]\nonumber   \\
             & \leq \rho(\mathcal{S}) + b\max_{x\in V} [\rho(\mathcal{S}\cup \{x\})  - \rho(\mathcal{S})] \nonumber 
\end{align} 
}

It indicates that the following inequality holds. 
\begin{align}
\begin{aligned}
    & \rho(OPT)\leq \rho(\mathcal{S}) + b\max_{x\in V} [\rho(\mathcal{S}\cup \{x\})  - \rho(\mathcal{S})] \\
    & \frac{1}{b}(\rho(OPT)-\rho(\mathcal{S})) \leq \max_{x\in V} [\rho(\mathcal{S}\cup \{x\})  - \rho(\mathcal{S})] 
\end{aligned}
\end{align} 
Hence, our Lemma~\ref{lemma:ineq} holds. \qedhere

\section{Proof of Theorem~\ref{theorem:ratio}}\label{app:ratio_proof}

Let denote $\mathcal{K}^i$ as the solution of our algorithm at the end of the iteration $i\leq b$. Then, we reuse the Lemma~\ref{lemma:ineq}. 
\begin{align}
\rho(\mathcal{K}^i) &- \rho(\mathcal{K}^{i-1})  \geq \frac{1}{b} (\rho(OPT) - \rho(\mathcal{K}^{i-1})) \nonumber \\
\rho(\mathcal{K}^i) &- \rho(\mathcal{K}^{i-1})  \geq  \frac{1}{b}\rho(OPT) - \frac{1}{b}\rho(\mathcal{K}^{i-1}) \nonumber \\
\rho(\mathcal{K}^i) &\geq \frac{1}{b} \rho(OPT) + (1- \frac{1}{b}) \rho(\mathcal{K}^{i-1})  \\
&\geq \frac{1}{b} \rho(OPT) + (1- \frac{1}{b}) [\frac{1}{b}\rho(OPT) + (1-\frac{1}{b})\rho(\mathcal{K}^{i-2})] \nonumber \\
&\geq \dfrac{1- (1 - \frac{1}{b})^b}{b(1 - (1 - \frac{1}{b}))} \rho(OPT) = [1- (1 - \frac{1}{b})^b] \rho(OPT) \\
& \geq (1 - \frac{1}{e}) \rho(OPT)\nonumber
\end{align}

Therefore, we notice that $\rho(\mathcal{K}^i) \geq (1-\frac{1}{e}) \rho(OPT)$. 

\textcolor{black}{
In $G'$, we can find a solution of {\LUEM} with $b=b'$, $k=1$, and $r=1$.
To find a solution, we must choose the nodes in $V_1'$ since it can make at least $1+y+z$ nodes to be engaged as we discussed. Thus, if we find a solution of {\LUEM} in $G'$ and it covers all the nodes in $V_2'$, we can solve the set-cover problem with parameter $b'$. Hence, {\LUEM} problem is NP-hard. 
}

\bibliographystyle{ACM-Reference-Format}
\bibliography{sample-base}


\begin{thebibliography}{87}


\ifx \showCODEN    \undefined \def \showCODEN     #1{\unskip}     \fi
\ifx \showDOI      \undefined \def \showDOI       #1{#1}\fi
\ifx \showISBNx    \undefined \def \showISBNx     #1{\unskip}     \fi
\ifx \showISBNxiii \undefined \def \showISBNxiii  #1{\unskip}     \fi
\ifx \showISSN     \undefined \def \showISSN      #1{\unskip}     \fi
\ifx \showLCCN     \undefined \def \showLCCN      #1{\unskip}     \fi
\ifx \shownote     \undefined \def \shownote      #1{#1}          \fi
\ifx \showarticletitle \undefined \def \showarticletitle #1{#1}   \fi
\ifx \showURL      \undefined \def \showURL       {\relax}        \fi
\providecommand\bibfield[2]{#2}
\providecommand\bibinfo[2]{#2}
\providecommand\natexlab[1]{#1}
\providecommand\showeprint[2][]{arXiv:#2}

\bibitem[Backstrom et~al\mbox{.}(2012)]%
        {backstrom2012four}
\bibfield{author}{\bibinfo{person}{Lars Backstrom}, \bibinfo{person}{Paolo
  Boldi}, \bibinfo{person}{Marco Rosa}, \bibinfo{person}{Johan Ugander}, {and}
  \bibinfo{person}{Sebastiano Vigna}.} \bibinfo{year}{2012}\natexlab{}.
\newblock \showarticletitle{Four degrees of separation}. In
  \bibinfo{booktitle}{\emph{WebSci}}. \bibinfo{pages}{33--42}.
\newblock


\bibitem[Bao et~al\mbox{.}(2012)]%
        {bao1}
\bibfield{author}{\bibinfo{person}{Jie Bao}, \bibinfo{person}{Yu Zheng}, {and}
  \bibinfo{person}{Mohamed~F Mokbel}.} \bibinfo{year}{2012}\natexlab{}.
\newblock \showarticletitle{Location-based and preference-aware recommendation
  using sparse geo-social networking data}. In
  \bibinfo{booktitle}{\emph{SIGSPATIAL}}. \bibinfo{publisher}{ACM},
  \bibinfo{address}{New York, NY, USA}, \bibinfo{pages}{199--208}.
\newblock


\bibitem[Barab{\'a}si and Albert(1999)]%
        {barabasi1999emergence}
\bibfield{author}{\bibinfo{person}{Albert-L{\'a}szl{\'o} Barab{\'a}si} {and}
  \bibinfo{person}{R{\'e}ka Albert}.} \bibinfo{year}{1999}\natexlab{}.
\newblock \showarticletitle{Emergence of scaling in random networks}.
\newblock \bibinfo{journal}{\emph{science}} \bibinfo{volume}{286},
  \bibinfo{number}{5439} (\bibinfo{year}{1999}), \bibinfo{pages}{509--512}.
\newblock


\bibitem[Barbieri et~al\mbox{.}(2015)]%
        {barbieri2015efficient}
\bibfield{author}{\bibinfo{person}{Nicola Barbieri}, \bibinfo{person}{Francesco
  Bonchi}, \bibinfo{person}{Edoardo Galimberti}, {and}
  \bibinfo{person}{Francesco Gullo}.} \bibinfo{year}{2015}\natexlab{}.
\newblock \showarticletitle{Efficient and effective community search}.
\newblock \bibinfo{journal}{\emph{Data mining and knowledge discovery}}
  \bibinfo{volume}{29}, \bibinfo{number}{5} (\bibinfo{year}{2015}),
  \bibinfo{pages}{1406--1433}.
\newblock


\bibitem[Batagelj and Zaversnik(2003)]%
        {batagelj2003m}
\bibfield{author}{\bibinfo{person}{Vladimir Batagelj} {and}
  \bibinfo{person}{Matjaz Zaversnik}.} \bibinfo{year}{2003}\natexlab{}.
\newblock \showarticletitle{An O (m) algorithm for cores decomposition of
  networks}.
\newblock \bibinfo{journal}{\emph{arXiv preprint cs/0310049}}
  (\bibinfo{year}{2003}).
\newblock


\bibitem[Bavelas(1950)]%
        {bavelas1950communication}
\bibfield{author}{\bibinfo{person}{Alex Bavelas}.}
  \bibinfo{year}{1950}\natexlab{}.
\newblock \showarticletitle{Communication patterns in task-oriented groups}.
\newblock \bibinfo{journal}{\emph{The journal of the acoustical society of
  America}} \bibinfo{volume}{22}, \bibinfo{number}{6} (\bibinfo{year}{1950}),
  \bibinfo{pages}{725--730}.
\newblock


\bibitem[Bernhard and Vygen(2008)]%
        {bernhard2008combinatorial}
\bibfield{author}{\bibinfo{person}{KORTE Bernhard} {and} \bibinfo{person}{JENS
  Vygen}.} \bibinfo{year}{2008}\natexlab{}.
\newblock \showarticletitle{Combinatorial optimization: Theory and algorithms}.
\newblock \bibinfo{journal}{\emph{Springer, Third Edition, 2005.}}
  (\bibinfo{year}{2008}).
\newblock


\bibitem[Bhawalkar et~al\mbox{.}(2012)]%
        {bhawalkar2012preventing}
\bibfield{author}{\bibinfo{person}{Kshipra Bhawalkar}, \bibinfo{person}{Jon
  Kleinberg}, \bibinfo{person}{Kevin Lewi}, \bibinfo{person}{Tim Roughgarden},
  {and} \bibinfo{person}{Aneesh Sharma}.} \bibinfo{year}{2012}\natexlab{}.
\newblock \showarticletitle{Preventing unraveling in social networks: the
  anchored k-core problem}. In \bibinfo{booktitle}{\emph{ICALP}}.
  \bibinfo{pages}{440--451}.
\newblock


\bibitem[Bhawalkar et~al\mbox{.}(2015)]%
        {bhawalkar2015preventing}
\bibfield{author}{\bibinfo{person}{Kshipra Bhawalkar}, \bibinfo{person}{Jon
  Kleinberg}, \bibinfo{person}{Kevin Lewi}, \bibinfo{person}{Tim Roughgarden},
  {and} \bibinfo{person}{Aneesh Sharma}.} \bibinfo{year}{2015}\natexlab{}.
\newblock \showarticletitle{Preventing unraveling in social networks: the
  anchored k-core problem}.
\newblock \bibinfo{journal}{\emph{SIDMA}} \bibinfo{volume}{29},
  \bibinfo{number}{3} (\bibinfo{year}{2015}), \bibinfo{pages}{1452--1475}.
\newblock


\bibitem[Boldi et~al\mbox{.}(2011)]%
        {boldi2011hyperanf}
\bibfield{author}{\bibinfo{person}{Paolo Boldi}, \bibinfo{person}{Marco Rosa},
  {and} \bibinfo{person}{Sebastiano Vigna}.} \bibinfo{year}{2011}\natexlab{}.
\newblock \showarticletitle{HyperANF: Approximating the neighbourhood function
  of very large graphs on a budget}. In \bibinfo{booktitle}{\emph{WWW}}.
  \bibinfo{pages}{625--634}.
\newblock


\bibitem[Boldi and Vigna(2004)]%
        {boldi2004webgraph}
\bibfield{author}{\bibinfo{person}{Paolo Boldi} {and}
  \bibinfo{person}{Sebastiano Vigna}.} \bibinfo{year}{2004}\natexlab{}.
\newblock \showarticletitle{The webgraph framework I: compression techniques}.
  In \bibinfo{booktitle}{\emph{WWW}}. \bibinfo{pages}{595--602}.
\newblock


\bibitem[Bonacich(1987)]%
        {bonacich1987power}
\bibfield{author}{\bibinfo{person}{Phillip Bonacich}.}
  \bibinfo{year}{1987}\natexlab{}.
\newblock \showarticletitle{Power and centrality: A family of measures}.
\newblock \bibinfo{journal}{\emph{American journal of sociology}}
  \bibinfo{volume}{92}, \bibinfo{number}{5} (\bibinfo{year}{1987}),
  \bibinfo{pages}{1170--1182}.
\newblock


\bibitem[Bonacich and Lloyd(2001)]%
        {bonacich2001eigenvector}
\bibfield{author}{\bibinfo{person}{Phillip Bonacich} {and}
  \bibinfo{person}{Paulette Lloyd}.} \bibinfo{year}{2001}\natexlab{}.
\newblock \showarticletitle{Eigenvector-like measures of centrality for
  asymmetric relations}.
\newblock \bibinfo{journal}{\emph{Social networks}} \bibinfo{volume}{23},
  \bibinfo{number}{3} (\bibinfo{year}{2001}), \bibinfo{pages}{191--201}.
\newblock


\bibitem[Bonchi et~al\mbox{.}(2014)]%
        {bonchi2014core}
\bibfield{author}{\bibinfo{person}{Francesco Bonchi},
  \bibinfo{person}{Francesco Gullo}, \bibinfo{person}{Andreas Kaltenbrunner},
  {and} \bibinfo{person}{Yana Volkovich}.} \bibinfo{year}{2014}\natexlab{}.
\newblock \showarticletitle{Core decomposition of uncertain graphs}. In
  \bibinfo{booktitle}{\emph{SIGKDD}}. \bibinfo{pages}{1316--1325}.
\newblock


\bibitem[Bonchi et~al\mbox{.}(2019)]%
        {bonchi2019distance}
\bibfield{author}{\bibinfo{person}{Francesco Bonchi}, \bibinfo{person}{Arijit
  Khan}, {and} \bibinfo{person}{Lorenzo Severini}.}
  \bibinfo{year}{2019}\natexlab{}.
\newblock \showarticletitle{Distance-generalized core decomposition}. In
  \bibinfo{booktitle}{\emph{SIGMOD}}. \bibinfo{pages}{1006--1023}.
\newblock


\bibitem[Borgatti and Everett(2006)]%
        {borgatti2006graph}
\bibfield{author}{\bibinfo{person}{Stephen~P Borgatti} {and}
  \bibinfo{person}{Martin~G Everett}.} \bibinfo{year}{2006}\natexlab{}.
\newblock \showarticletitle{A graph-theoretic perspective on centrality}.
\newblock \bibinfo{journal}{\emph{Social networks}} \bibinfo{volume}{28},
  \bibinfo{number}{4} (\bibinfo{year}{2006}), \bibinfo{pages}{466--484}.
\newblock


\bibitem[Brandes(2001)]%
        {brandes2001faster}
\bibfield{author}{\bibinfo{person}{Ulrik Brandes}.}
  \bibinfo{year}{2001}\natexlab{}.
\newblock \showarticletitle{A faster algorithm for betweenness centrality}.
\newblock \bibinfo{journal}{\emph{Journal of mathematical sociology}}
  \bibinfo{volume}{25}, \bibinfo{number}{2} (\bibinfo{year}{2001}),
  \bibinfo{pages}{163--177}.
\newblock


\bibitem[Cai et~al\mbox{.}(2020)]%
        {cai2020anchored}
\bibfield{author}{\bibinfo{person}{Taotao Cai}, \bibinfo{person}{Jianxin Li},
  \bibinfo{person}{Nur Al~Hasan Haldar}, \bibinfo{person}{Ajmal Mian},
  \bibinfo{person}{John Yearwood}, {and} \bibinfo{person}{Timos Sellis}.}
  \bibinfo{year}{2020}\natexlab{}.
\newblock \showarticletitle{Anchored Vertex Exploration for Community
  Engagement in Social Networks}. In \bibinfo{booktitle}{\emph{ICDE}}. IEEE,
  \bibinfo{pages}{409--420}.
\newblock


\bibitem[Chang et~al\mbox{.}(2015)]%
        {chang2015index}
\bibfield{author}{\bibinfo{person}{Lijun Chang}, \bibinfo{person}{Xuemin Lin},
  \bibinfo{person}{Lu Qin}, \bibinfo{person}{Jeffrey~Xu Yu}, {and}
  \bibinfo{person}{Wenjie Zhang}.} \bibinfo{year}{2015}\natexlab{}.
\newblock \showarticletitle{Index-based optimal algorithms for computing
  steiner components with maximum connectivity}. In
  \bibinfo{booktitle}{\emph{SIGMOD}}. \bibinfo{pages}{459--474}.
\newblock


\bibitem[Chen et~al\mbox{.}(2013)]%
        {chen2013identifying}
\bibfield{author}{\bibinfo{person}{Duan-Bing Chen}, \bibinfo{person}{Hui Gao},
  \bibinfo{person}{Linyuan L{\"u}}, {and} \bibinfo{person}{Tao Zhou}.}
  \bibinfo{year}{2013}\natexlab{}.
\newblock \showarticletitle{Identifying influential nodes in large-scale
  directed networks: the role of clustering}.
\newblock \bibinfo{journal}{\emph{PloS one}} \bibinfo{volume}{8},
  \bibinfo{number}{10} (\bibinfo{year}{2013}), \bibinfo{pages}{e77455}.
\newblock


\bibitem[Chen et~al\mbox{.}(2010)]%
        {chen2010scalable}
\bibfield{author}{\bibinfo{person}{Wei Chen}, \bibinfo{person}{Chi Wang}, {and}
  \bibinfo{person}{Yajun Wang}.} \bibinfo{year}{2010}\natexlab{}.
\newblock \showarticletitle{Scalable influence maximization for prevalent viral
  marketing in large-scale social networks}. In
  \bibinfo{booktitle}{\emph{SIGKDD}}. \bibinfo{pages}{1029--1038}.
\newblock


\bibitem[Chitnis et~al\mbox{.}(2013)]%
        {chitnis2013preventing}
\bibfield{author}{\bibinfo{person}{Rajesh Chitnis}, \bibinfo{person}{Fedor
  Fomin}, {and} \bibinfo{person}{Petr Golovach}.}
  \bibinfo{year}{2013}\natexlab{}.
\newblock \showarticletitle{Preventing unraveling in social networks gets
  harder}. In \bibinfo{booktitle}{\emph{AAAI}} (Bellevue, Washington)
  \emph{(\bibinfo{series}{AAAI'13})}. \bibinfo{publisher}{AAAI Press},
  \bibinfo{pages}{1085–1091}.
\newblock


\bibitem[Cho et~al\mbox{.}(2011)]%
        {cho2011friendship}
\bibfield{author}{\bibinfo{person}{Eunjoon Cho}, \bibinfo{person}{Seth~A
  Myers}, {and} \bibinfo{person}{Jure Leskovec}.}
  \bibinfo{year}{2011}\natexlab{}.
\newblock \showarticletitle{Friendship and mobility: user movement in
  location-based social networks}. In \bibinfo{booktitle}{\emph{SIGKDD}}. ACM,
  \bibinfo{publisher}{ACM}, \bibinfo{address}{New York, NY, USA},
  \bibinfo{pages}{1082--1090}.
\newblock


\bibitem[Cohen(2008)]%
        {cohen2008trusses}
\bibfield{author}{\bibinfo{person}{Jonathan Cohen}.}
  \bibinfo{year}{2008}\natexlab{}.
\newblock \showarticletitle{Trusses: Cohesive subgraphs for social network
  analysis}.
\newblock \bibinfo{journal}{\emph{National security agency technical report}}
  \bibinfo{volume}{16}, \bibinfo{number}{3.1} (\bibinfo{year}{2008}).
\newblock


\bibitem[Danon et~al\mbox{.}(2005)]%
        {danon2005comparing}
\bibfield{author}{\bibinfo{person}{Leon Danon}, \bibinfo{person}{Albert
  Diaz-Guilera}, \bibinfo{person}{Jordi Duch}, {and} \bibinfo{person}{Alex
  Arenas}.} \bibinfo{year}{2005}\natexlab{}.
\newblock \showarticletitle{Comparing community structure identification}.
\newblock \bibinfo{journal}{\emph{Journal of statistical mechanics: Theory and
  experiment}} \bibinfo{volume}{2005}, \bibinfo{number}{09}
  (\bibinfo{year}{2005}), \bibinfo{pages}{P09008}.
\newblock


\bibitem[De~Meo et~al\mbox{.}(2012)]%
        {de2012novel}
\bibfield{author}{\bibinfo{person}{Pasquale De~Meo}, \bibinfo{person}{Emilio
  Ferrara}, \bibinfo{person}{Giacomo Fiumara}, {and} \bibinfo{person}{Angela
  Ricciardello}.} \bibinfo{year}{2012}\natexlab{}.
\newblock \showarticletitle{A novel measure of edge centrality in social
  networks}.
\newblock \bibinfo{journal}{\emph{Knowledge-based systems}}
  \bibinfo{volume}{30} (\bibinfo{year}{2012}), \bibinfo{pages}{136--150}.
\newblock


\bibitem[Ding et~al\mbox{.}(2017)]%
        {ding2017efficient}
\bibfield{author}{\bibinfo{person}{Danhao Ding}, \bibinfo{person}{Hui Li},
  \bibinfo{person}{Zhipeng Huang}, {and} \bibinfo{person}{Nikos Mamoulis}.}
  \bibinfo{year}{2017}\natexlab{}.
\newblock \showarticletitle{Efficient fault-tolerant group recommendation using
  alpha-beta-core}. In \bibinfo{booktitle}{\emph{CIKM}}.
  \bibinfo{pages}{2047--2050}.
\newblock


\bibitem[Eidsaa and Almaas(2013)]%
        {eidsaa2013s}
\bibfield{author}{\bibinfo{person}{Marius Eidsaa} {and} \bibinfo{person}{Eivind
  Almaas}.} \bibinfo{year}{2013}\natexlab{}.
\newblock \showarticletitle{S-core network decomposition: A generalization of
  k-core analysis to weighted networks}.
\newblock \bibinfo{journal}{\emph{PRE}} \bibinfo{volume}{88},
  \bibinfo{number}{6} (\bibinfo{year}{2013}), \bibinfo{pages}{062819}.
\newblock


\bibitem[Fang et~al\mbox{.}(2017)]%
        {fang2017effective}
\bibfield{author}{\bibinfo{person}{Yixiang Fang}, \bibinfo{person}{Reynold
  Cheng}, \bibinfo{person}{Xiaodong Li}, \bibinfo{person}{Siqiang Luo}, {and}
  \bibinfo{person}{Jiafeng Hu}.} \bibinfo{year}{2017}\natexlab{}.
\newblock \showarticletitle{Effective community search over large spatial
  graphs}.
\newblock \bibinfo{journal}{\emph{Proceedings of the VLDB Endowment}}
  \bibinfo{volume}{10}, \bibinfo{number}{6} (\bibinfo{year}{2017}),
  \bibinfo{pages}{709--720}.
\newblock


\bibitem[Fang et~al\mbox{.}(2016)]%
        {fang2016effective}
\bibfield{author}{\bibinfo{person}{Yixiang Fang}, \bibinfo{person}{Reynold
  Cheng}, \bibinfo{person}{Siqiang Luo}, {and} \bibinfo{person}{Jiafeng Hu}.}
  \bibinfo{year}{2016}\natexlab{}.
\newblock \showarticletitle{Effective community search for large attributed
  graphs}.
\newblock \bibinfo{journal}{\emph{Proceedings of the VLDB Endowment}}
  \bibinfo{volume}{9}, \bibinfo{number}{12} (\bibinfo{year}{2016}),
  \bibinfo{pages}{1233--1244}.
\newblock


\bibitem[Fang et~al\mbox{.}(2020)]%
        {fang2020survey}
\bibfield{author}{\bibinfo{person}{Yixiang Fang}, \bibinfo{person}{Xin Huang},
  \bibinfo{person}{Lu Qin}, \bibinfo{person}{Ying Zhang},
  \bibinfo{person}{Wenjie Zhang}, \bibinfo{person}{Reynold Cheng}, {and}
  \bibinfo{person}{Xuemin Lin}.} \bibinfo{year}{2020}\natexlab{}.
\newblock \showarticletitle{A survey of community search over big graphs}.
\newblock \bibinfo{journal}{\emph{VLDBJ}} \bibinfo{volume}{29},
  \bibinfo{number}{1} (\bibinfo{year}{2020}), \bibinfo{pages}{353--392}.
\newblock


\bibitem[Flajolet et~al\mbox{.}(2007)]%
        {flajolet2007hyperloglog}
\bibfield{author}{\bibinfo{person}{Philippe Flajolet},
  \bibinfo{person}{{\'E}ric Fusy}, \bibinfo{person}{Olivier Gandouet}, {and}
  \bibinfo{person}{Fr{\'e}d{\'e}ric Meunier}.} \bibinfo{year}{2007}\natexlab{}.
\newblock \showarticletitle{Hyperloglog: the analysis of a near-optimal
  cardinality estimation algorithm}. In \bibinfo{booktitle}{\emph{DMTCS}}.
  Discrete Mathematics and Theoretical Computer Science,
  \bibinfo{pages}{137--156}.
\newblock


\bibitem[Galimberti et~al\mbox{.}(2017)]%
        {galimberti2017core}
\bibfield{author}{\bibinfo{person}{Edoardo Galimberti},
  \bibinfo{person}{Francesco Bonchi}, {and} \bibinfo{person}{Francesco Gullo}.}
  \bibinfo{year}{2017}\natexlab{}.
\newblock \showarticletitle{Core decomposition and densest subgraph in
  multilayer networks}. In \bibinfo{booktitle}{\emph{CIKM}}.
  \bibinfo{pages}{1807--1816}.
\newblock


\bibitem[Ghafouri et~al\mbox{.}(2020)]%
        {ghafouri2020efficient}
\bibfield{author}{\bibinfo{person}{Maryam Ghafouri}, \bibinfo{person}{Kai
  Wang}, \bibinfo{person}{Fan Zhang}, \bibinfo{person}{Ying Zhang}, {and}
  \bibinfo{person}{Xuemin Lin}.} \bibinfo{year}{2020}\natexlab{}.
\newblock \showarticletitle{Efficient graph hierarchical decomposition with
  user engagement and tie strength}. In \bibinfo{booktitle}{\emph{DASFAA}}.
  Springer, \bibinfo{pages}{448--465}.
\newblock


\bibitem[Giatsidis et~al\mbox{.}(2014)]%
        {giatsidis2014quantifying}
\bibfield{author}{\bibinfo{person}{Christos Giatsidis}, \bibinfo{person}{Bogdan
  Cautis}, \bibinfo{person}{Silviu Maniu}, \bibinfo{person}{Dimitrios~M
  Thilikos}, {and} \bibinfo{person}{Michalis Vazirgiannis}.}
  \bibinfo{year}{2014}\natexlab{}.
\newblock \showarticletitle{Quantifying trust dynamics in signed graphs, the
  S-Cores approach}. In \bibinfo{booktitle}{\emph{SDM}}. SIAM,
  \bibinfo{pages}{668--676}.
\newblock


\bibitem[Giatsidis et~al\mbox{.}(2013)]%
        {giatsidis2013d}
\bibfield{author}{\bibinfo{person}{Christos Giatsidis},
  \bibinfo{person}{Dimitrios~M Thilikos}, {and} \bibinfo{person}{Michalis
  Vazirgiannis}.} \bibinfo{year}{2013}\natexlab{}.
\newblock \showarticletitle{D-cores: measuring collaboration of directed graphs
  based on degeneracy}.
\newblock \bibinfo{journal}{\emph{KAIS}} \bibinfo{volume}{35},
  \bibinfo{number}{2} (\bibinfo{year}{2013}), \bibinfo{pages}{311--343}.
\newblock


\bibitem[Huang et~al\mbox{.}(2019)]%
        {huang2019community}
\bibfield{author}{\bibinfo{person}{Huimin Huang}, \bibinfo{person}{Hong Shen},
  \bibinfo{person}{Zaiqiao Meng}, \bibinfo{person}{Huajian Chang}, {and}
  \bibinfo{person}{Huaiwen He}.} \bibinfo{year}{2019}\natexlab{}.
\newblock \showarticletitle{Community-based influence maximization for viral
  marketing}.
\newblock \bibinfo{journal}{\emph{Applied Intelligence}} \bibinfo{volume}{49},
  \bibinfo{number}{6} (\bibinfo{year}{2019}), \bibinfo{pages}{2137--2150}.
\newblock


\bibitem[Huang et~al\mbox{.}(2014)]%
        {huang2014querying}
\bibfield{author}{\bibinfo{person}{Xin Huang}, \bibinfo{person}{Hong Cheng},
  \bibinfo{person}{Lu Qin}, \bibinfo{person}{Wentao Tian}, {and}
  \bibinfo{person}{Jeffrey~Xu Yu}.} \bibinfo{year}{2014}\natexlab{}.
\newblock \showarticletitle{Querying k-truss community in large and dynamic
  graphs}. In \bibinfo{booktitle}{\emph{SIGMOD}}. \bibinfo{pages}{1311--1322}.
\newblock


\bibitem[Huang et~al\mbox{.}(2015)]%
        {huang2015}
\bibfield{author}{\bibinfo{person}{Xin Huang}, \bibinfo{person}{Laks V.~S.
  Lakshmanan}, \bibinfo{person}{Jeffrey~Xu Yu}, {and} \bibinfo{person}{Hong
  Cheng}.} \bibinfo{year}{2015}\natexlab{}.
\newblock \showarticletitle{Approximate Closest Community Search in Networks}.
\newblock \bibinfo{journal}{\emph{PVLDB}} \bibinfo{volume}{9},
  \bibinfo{number}{4} (\bibinfo{date}{dec} \bibinfo{year}{2015}),
  \bibinfo{pages}{276–287}.
\newblock


\bibitem[Hubert and Arabie(1985)]%
        {hubert1985comparing}
\bibfield{author}{\bibinfo{person}{Lawrence Hubert} {and}
  \bibinfo{person}{Phipps Arabie}.} \bibinfo{year}{1985}\natexlab{}.
\newblock \showarticletitle{Comparing partitions}.
\newblock \bibinfo{journal}{\emph{Journal of classification}}
  \bibinfo{volume}{2}, \bibinfo{number}{1} (\bibinfo{year}{1985}),
  \bibinfo{pages}{193--218}.
\newblock


\bibitem[Jin et~al\mbox{.}(2013)]%
        {jin2013understanding}
\bibfield{author}{\bibinfo{person}{Long Jin}, \bibinfo{person}{Yang Chen},
  \bibinfo{person}{Tianyi Wang}, \bibinfo{person}{Pan Hui}, {and}
  \bibinfo{person}{Athanasios~V Vasilakos}.} \bibinfo{year}{2013}\natexlab{}.
\newblock \showarticletitle{Understanding user behavior in online social
  networks: A survey}.
\newblock \bibinfo{journal}{\emph{IEEE Communications Magazine}}
  \bibinfo{volume}{51}, \bibinfo{number}{9} (\bibinfo{year}{2013}),
  \bibinfo{pages}{144--150}.
\newblock


\bibitem[Kempe et~al\mbox{.}(2003)]%
        {kempe2003maximizing}
\bibfield{author}{\bibinfo{person}{David Kempe}, \bibinfo{person}{Jon
  Kleinberg}, {and} \bibinfo{person}{{\'E}va Tardos}.}
  \bibinfo{year}{2003}\natexlab{}.
\newblock \showarticletitle{Maximizing the spread of influence through a social
  network}. In \bibinfo{booktitle}{\emph{SIGKDD}}. \bibinfo{pages}{137--146}.
\newblock


\bibitem[Kim et~al\mbox{.}(2014)]%
        {kim2014link}
\bibfield{author}{\bibinfo{person}{Jungeun Kim}, \bibinfo{person}{Minsoo Choy},
  \bibinfo{person}{Daehoon Kim}, {and} \bibinfo{person}{U Kang}.}
  \bibinfo{year}{2014}\natexlab{}.
\newblock \showarticletitle{Link prediction based on generalized cluster
  information}. In \bibinfo{booktitle}{\emph{WWW}}. \bibinfo{pages}{317--318}.
\newblock


\bibitem[Kim et~al\mbox{.}(2020)]%
        {kim2020densely}
\bibfield{author}{\bibinfo{person}{Junghoon Kim}, \bibinfo{person}{Tao Guo},
  \bibinfo{person}{Kaiyu Feng}, \bibinfo{person}{Gao Cong},
  \bibinfo{person}{Arijit Khan}, {and} \bibinfo{person}{Farhana~M Choudhury}.}
  \bibinfo{year}{2020}\natexlab{}.
\newblock \showarticletitle{Densely connected user community and location
  cluster search in location-based social networks}. In
  \bibinfo{booktitle}{\emph{SIGMOD}}. \bibinfo{pages}{2199--2209}.
\newblock


\bibitem[Kim et~al\mbox{.}(2022a)]%
        {kim2022ocsm}
\bibfield{author}{\bibinfo{person}{Junghoon Kim}, \bibinfo{person}{Sungsu Lim},
  {and} \bibinfo{person}{Jungeun Kim}.} \bibinfo{year}{2022}\natexlab{a}.
\newblock \bibinfo{title}{OCSM : Finding Overlapping Cohesive Subgraphs with
  Minimum Degree}.
\newblock
\newblock
\showeprint[arxiv]{2202.03255}~[cs.SI]


\bibitem[Kim et~al\mbox{.}(2022b)]%
        {dmcs}
\bibfield{author}{\bibinfo{person}{Junghoon Kim}, \bibinfo{person}{Siqiang
  Luo}, \bibinfo{person}{Gao Cong}, {and} \bibinfo{person}{Wenyuan Yu}.}
  \bibinfo{year}{2022}\natexlab{b}.
\newblock \showarticletitle{DMCS : Density Modularity Based Community Search}.
  In \bibinfo{booktitle}{\emph{SIGMOD}} \emph{(\bibinfo{series}{SIGMOD '22})}.
  \bibinfo{publisher}{ACM}, \bibinfo{address}{New York, NY, USA},
  \bibinfo{pages}{889–903}.
\newblock


\bibitem[Laishram et~al\mbox{.}(2020)]%
        {laishram2020residual}
\bibfield{author}{\bibinfo{person}{Ricky Laishram}, \bibinfo{person}{Ahmet
  Erdem~Sar}, \bibinfo{person}{Tina Eliassi-Rad}, \bibinfo{person}{Ali Pinar},
  {and} \bibinfo{person}{Sucheta Soundarajan}.}
  \bibinfo{year}{2020}\natexlab{}.
\newblock \showarticletitle{Residual Core Maximization: An Efficient Algorithm
  for Maximizing the Size of the k-Core}. In \bibinfo{booktitle}{\emph{SDM}}.
  SIAM, \bibinfo{pages}{325--333}.
\newblock


\bibitem[Lancichinetti et~al\mbox{.}(2008)]%
        {lancichinetti2008benchmark}
\bibfield{author}{\bibinfo{person}{Andrea Lancichinetti},
  \bibinfo{person}{Santo Fortunato}, {and} \bibinfo{person}{Filippo Radicchi}.}
  \bibinfo{year}{2008}\natexlab{}.
\newblock \showarticletitle{Benchmark graphs for testing community detection
  algorithms}.
\newblock \bibinfo{journal}{\emph{PRE}} \bibinfo{volume}{78},
  \bibinfo{number}{4} (\bibinfo{year}{2008}), \bibinfo{pages}{046110}.
\newblock


\bibitem[Leskovec et~al\mbox{.}(2007)]%
        {leskovec2007graph}
\bibfield{author}{\bibinfo{person}{Jure Leskovec}, \bibinfo{person}{Jon
  Kleinberg}, {and} \bibinfo{person}{Christos Faloutsos}.}
  \bibinfo{year}{2007}\natexlab{}.
\newblock \showarticletitle{Graph evolution: Densification and shrinking
  diameters}.
\newblock \bibinfo{journal}{\emph{TKDD}} \bibinfo{volume}{1},
  \bibinfo{number}{1} (\bibinfo{year}{2007}), \bibinfo{pages}{2}.
\newblock


\bibitem[Li et~al\mbox{.}(2018)]%
        {li2018influence}
\bibfield{author}{\bibinfo{person}{Yuchen Li}, \bibinfo{person}{Ju Fan},
  \bibinfo{person}{Yanhao Wang}, {and} \bibinfo{person}{Kian-Lee Tan}.}
  \bibinfo{year}{2018}\natexlab{}.
\newblock \showarticletitle{Influence maximization on social graphs: A survey}.
\newblock \bibinfo{journal}{\emph{TKDE}} \bibinfo{volume}{30},
  \bibinfo{number}{10} (\bibinfo{year}{2018}), \bibinfo{pages}{1852--1872}.
\newblock


\bibitem[Linghu et~al\mbox{.}(2020)]%
        {linghu2020global}
\bibfield{author}{\bibinfo{person}{Qingyuan Linghu}, \bibinfo{person}{Fan
  Zhang}, \bibinfo{person}{Xuemin Lin}, \bibinfo{person}{Wenjie Zhang}, {and}
  \bibinfo{person}{Ying Zhang}.} \bibinfo{year}{2020}\natexlab{}.
\newblock \showarticletitle{Global reinforcement of social networks: The
  anchored coreness problem}. In \bibinfo{booktitle}{\emph{SIGMOD}}.
  \bibinfo{pages}{2211--2226}.
\newblock


\bibitem[Liu et~al\mbox{.}(2021)]%
        {liu2021efficient}
\bibfield{author}{\bibinfo{person}{Kaixin Liu}, \bibinfo{person}{Sibo Wang},
  \bibinfo{person}{Yong Zhang}, {and} \bibinfo{person}{Chunxiao Xing}.}
  \bibinfo{year}{2021}\natexlab{}.
\newblock \showarticletitle{An Efficient Algorithm for the Anchored k-Core
  Budget Minimization Problem}. In \bibinfo{booktitle}{\emph{ICDE}}. IEEE,
  \bibinfo{pages}{1356--1367}.
\newblock


\bibitem[Luce(1950)]%
        {luce1950connectivity}
\bibfield{author}{\bibinfo{person}{R~Duncan Luce}.}
  \bibinfo{year}{1950}\natexlab{}.
\newblock \showarticletitle{Connectivity and generalized cliques in sociometric
  group structure}.
\newblock \bibinfo{journal}{\emph{Psychometrika}} \bibinfo{volume}{15},
  \bibinfo{number}{2} (\bibinfo{year}{1950}), \bibinfo{pages}{169--190}.
\newblock


\bibitem[Luo et~al\mbox{.}(2021)]%
        {luo2021parameterized}
\bibfield{author}{\bibinfo{person}{Junjie Luo}, \bibinfo{person}{Hendrik
  Molter}, {and} \bibinfo{person}{Ond{\v{r}}ej Such{\`y}}.}
  \bibinfo{year}{2021}\natexlab{}.
\newblock \showarticletitle{A parameterized complexity view on collapsing
  k-cores}.
\newblock \bibinfo{journal}{\emph{TOCS}} (\bibinfo{year}{2021}),
  \bibinfo{pages}{1--40}.
\newblock


\bibitem[Malliaros et~al\mbox{.}(2020)]%
        {malliaros2020core}
\bibfield{author}{\bibinfo{person}{Fragkiskos~D Malliaros},
  \bibinfo{person}{Christos Giatsidis}, \bibinfo{person}{Apostolos~N
  Papadopoulos}, {and} \bibinfo{person}{Michalis Vazirgiannis}.}
  \bibinfo{year}{2020}\natexlab{}.
\newblock \showarticletitle{The core decomposition of networks: Theory,
  algorithms and applications}.
\newblock \bibinfo{journal}{\emph{The VLDB Journal}} \bibinfo{volume}{29},
  \bibinfo{number}{1} (\bibinfo{year}{2020}), \bibinfo{pages}{61--92}.
\newblock


\bibitem[Malliaros and Vazirgiannis(2013)]%
        {malliaros2013stay}
\bibfield{author}{\bibinfo{person}{Fragkiskos~D Malliaros} {and}
  \bibinfo{person}{Michalis Vazirgiannis}.} \bibinfo{year}{2013}\natexlab{}.
\newblock \showarticletitle{To stay or not to stay: modeling engagement
  dynamics in social graphs}. In \bibinfo{booktitle}{\emph{CIKM}}.
  \bibinfo{pages}{469--478}.
\newblock


\bibitem[Matula and Beck(1983)]%
        {matula1983smallest}
\bibfield{author}{\bibinfo{person}{David~W Matula} {and}
  \bibinfo{person}{Leland~L Beck}.} \bibinfo{year}{1983}\natexlab{}.
\newblock \showarticletitle{Smallest-last ordering and clustering and graph
  coloring algorithms}.
\newblock \bibinfo{journal}{\emph{Journal of the ACM (JACM)}}
  \bibinfo{volume}{30}, \bibinfo{number}{3} (\bibinfo{year}{1983}),
  \bibinfo{pages}{417--427}.
\newblock


\bibitem[Mavroforakis et~al\mbox{.}(2015)]%
        {mavroforakis2015spanning}
\bibfield{author}{\bibinfo{person}{Charalampos Mavroforakis},
  \bibinfo{person}{Richard Garcia-Lebron}, \bibinfo{person}{Ioannis Koutis},
  {and} \bibinfo{person}{Evimaria Terzi}.} \bibinfo{year}{2015}\natexlab{}.
\newblock \showarticletitle{Spanning edge centrality: Large-scale computation
  and applications}. In \bibinfo{booktitle}{\emph{WWW}}.
  \bibinfo{pages}{732--742}.
\newblock


\bibitem[Michail et~al\mbox{.}(2020)]%
        {jgrapht}
\bibfield{author}{\bibinfo{person}{Dimitrios Michail}, \bibinfo{person}{Joris
  Kinable}, \bibinfo{person}{Barak Naveh}, {and} \bibinfo{person}{John~V.
  Sichi}.} \bibinfo{year}{2020}\natexlab{}.
\newblock \showarticletitle{JGraphT—A Java Library for Graph Data Structures
  and Algorithms}.
\newblock \bibinfo{journal}{\emph{ACM Trans. Math. Softw.}}
  \bibinfo{volume}{46}, \bibinfo{number}{2}, Article \bibinfo{articleno}{16}
  (\bibinfo{date}{May} \bibinfo{year}{2020}), \bibinfo{numpages}{29}~pages.
\newblock
\showISSN{0098-3500}
\urldef\tempurl%
\url{https://doi.org/10.1145/3381449}
\showDOI{\tempurl}


\bibitem[Minoux(1978)]%
        {minoux1978accelerated}
\bibfield{author}{\bibinfo{person}{Michel Minoux}.}
  \bibinfo{year}{1978}\natexlab{}.
\newblock \showarticletitle{Accelerated greedy algorithms for maximizing
  submodular set functions}.
\newblock In \bibinfo{booktitle}{\emph{Optimization techniques}}.
  \bibinfo{publisher}{Springer}, \bibinfo{pages}{234--243}.
\newblock


\bibitem[Nelson et~al\mbox{.}(2004)]%
        {nelson2004university}
\bibfield{author}{\bibinfo{person}{Douglas~L Nelson}, \bibinfo{person}{Cathy~L
  McEvoy}, {and} \bibinfo{person}{Thomas~A Schreiber}.}
  \bibinfo{year}{2004}\natexlab{}.
\newblock \showarticletitle{The University of South Florida free association,
  rhyme, and word fragment norms}.
\newblock \bibinfo{journal}{\emph{Behavior Research Methods, Instruments, \&
  Computers}} \bibinfo{volume}{36}, \bibinfo{number}{3} (\bibinfo{year}{2004}),
  \bibinfo{pages}{402--407}.
\newblock


\bibitem[Nemhauser et~al\mbox{.}(1978)]%
        {nemhauser1978analysis}
\bibfield{author}{\bibinfo{person}{George~L Nemhauser},
  \bibinfo{person}{Laurence~A Wolsey}, {and} \bibinfo{person}{Marshall~L
  Fisher}.} \bibinfo{year}{1978}\natexlab{}.
\newblock \showarticletitle{An analysis of approximations for maximizing
  submodular set functions—I}.
\newblock \bibinfo{journal}{\emph{Mathematical programming}}
  \bibinfo{volume}{14}, \bibinfo{number}{1} (\bibinfo{year}{1978}),
  \bibinfo{pages}{265--294}.
\newblock


\bibitem[Newman(2003)]%
        {newman2003structure}
\bibfield{author}{\bibinfo{person}{Mark~EJ Newman}.}
  \bibinfo{year}{2003}\natexlab{}.
\newblock \showarticletitle{The structure and function of complex networks}.
\newblock \bibinfo{journal}{\emph{SIAM review}} \bibinfo{volume}{45},
  \bibinfo{number}{2} (\bibinfo{year}{2003}), \bibinfo{pages}{167--256}.
\newblock


\bibitem[Palmer et~al\mbox{.}(2002)]%
        {palmer2002anf}
\bibfield{author}{\bibinfo{person}{Christopher~R Palmer},
  \bibinfo{person}{Phillip~B Gibbons}, {and} \bibinfo{person}{Christos
  Faloutsos}.} \bibinfo{year}{2002}\natexlab{}.
\newblock \showarticletitle{ANF: A fast and scalable tool for data mining in
  massive graphs}. In \bibinfo{booktitle}{\emph{SIGKDD}}.
  \bibinfo{pages}{81--90}.
\newblock


\bibitem[Rochat(2009)]%
        {rochat2009closeness}
\bibfield{author}{\bibinfo{person}{Yannick Rochat}.}
  \bibinfo{year}{2009}\natexlab{}.
\newblock \bibinfo{booktitle}{\emph{Closeness centrality extended to
  unconnected graphs: The harmonic centrality index}}.
\newblock \bibinfo{type}{{T}echnical {R}eport}.
\newblock


\bibitem[Rossi and Ahmed(2015)]%
        {nr}
\bibfield{author}{\bibinfo{person}{Ryan~A. Rossi} {and}
  \bibinfo{person}{Nesreen~K. Ahmed}.} \bibinfo{year}{2015}\natexlab{}.
\newblock \showarticletitle{The Network Data Repository with Interactive Graph
  Analytics and Visualization}. In \bibinfo{booktitle}{\emph{AAAI}}.
  \bibinfo{pages}{4292–4293}.
\newblock
\urldef\tempurl%
\url{http://networkrepository.com}
\showURL{%
\tempurl}


\bibitem[Sabidussi(1966)]%
        {sabidussi1966centrality}
\bibfield{author}{\bibinfo{person}{Gert Sabidussi}.}
  \bibinfo{year}{1966}\natexlab{}.
\newblock \showarticletitle{The centrality index of a graph}.
\newblock \bibinfo{journal}{\emph{Psychometrika}} \bibinfo{volume}{31},
  \bibinfo{number}{4} (\bibinfo{year}{1966}), \bibinfo{pages}{581--603}.
\newblock


\bibitem[Salavaty et~al\mbox{.}(2020)]%
        {salavaty2020integrated}
\bibfield{author}{\bibinfo{person}{Abbas Salavaty}, \bibinfo{person}{Mirana
  Ramialison}, {and} \bibinfo{person}{Peter~D Currie}.}
  \bibinfo{year}{2020}\natexlab{}.
\newblock \showarticletitle{Integrated value of influence: an integrative
  method for the identification of the most influential nodes within networks}.
\newblock \bibinfo{journal}{\emph{Patterns}} \bibinfo{volume}{1},
  \bibinfo{number}{5} (\bibinfo{year}{2020}), \bibinfo{pages}{100052}.
\newblock


\bibitem[Sar{\i}y{\"u}ce et~al\mbox{.}(2016)]%
        {sariyuce2016incremental}
\bibfield{author}{\bibinfo{person}{Ahmet~Erdem Sar{\i}y{\"u}ce},
  \bibinfo{person}{Bu{\u{g}}ra Gedik}, \bibinfo{person}{Gabriela
  Jacques-Silva}, \bibinfo{person}{Kun-Lung Wu}, {and}
  \bibinfo{person}{{\"U}mit~V {\c{C}}ataly{\"u}rek}.}
  \bibinfo{year}{2016}\natexlab{}.
\newblock \showarticletitle{Incremental k-core decomposition: algorithms and
  evaluation}.
\newblock \bibinfo{journal}{\emph{VLDBJ}} \bibinfo{volume}{25},
  \bibinfo{number}{3} (\bibinfo{year}{2016}), \bibinfo{pages}{425--447}.
\newblock


\bibitem[Schneider et~al\mbox{.}(2009)]%
        {schneider2009understanding}
\bibfield{author}{\bibinfo{person}{Fabian Schneider}, \bibinfo{person}{Anja
  Feldmann}, \bibinfo{person}{Balachander Krishnamurthy}, {and}
  \bibinfo{person}{Walter Willinger}.} \bibinfo{year}{2009}\natexlab{}.
\newblock \showarticletitle{Understanding online social network usage from a
  network perspective}. In \bibinfo{booktitle}{\emph{Proceedings of the 9th ACM
  SIGCOMM Conference on Internet Measurement}}. \bibinfo{pages}{35--48}.
\newblock


\bibitem[Schrijver(2003)]%
        {schrijver2003combinatorial}
\bibfield{author}{\bibinfo{person}{Alexander Schrijver}.}
  \bibinfo{year}{2003}\natexlab{}.
\newblock \bibinfo{booktitle}{\emph{Combinatorial optimization: polyhedra and
  efficiency}}. Vol.~\bibinfo{volume}{24}.
\newblock \bibinfo{publisher}{Springer Science \& Business Media}.
\newblock


\bibitem[Seidman(1983)]%
        {seidman1983network}
\bibfield{author}{\bibinfo{person}{Stephen~B Seidman}.}
  \bibinfo{year}{1983}\natexlab{}.
\newblock \showarticletitle{Network structure and minimum degree}.
\newblock \bibinfo{journal}{\emph{Social networks}} \bibinfo{volume}{5},
  \bibinfo{number}{3} (\bibinfo{year}{1983}), \bibinfo{pages}{269--287}.
\newblock


\bibitem[Sozio and Gionis(2010)]%
        {sozio2010community}
\bibfield{author}{\bibinfo{person}{Mauro Sozio} {and}
  \bibinfo{person}{Aristides Gionis}.} \bibinfo{year}{2010}\natexlab{}.
\newblock \showarticletitle{The community-search problem and how to plan a
  successful cocktail party}. In \bibinfo{booktitle}{\emph{SIGKDD}}.
  \bibinfo{pages}{939--948}.
\newblock


\bibitem[Stephen and Toubia(2009)]%
        {stephen2009explaining}
\bibfield{author}{\bibinfo{person}{Andrew~T Stephen} {and}
  \bibinfo{person}{Olivier Toubia}.} \bibinfo{year}{2009}\natexlab{}.
\newblock \showarticletitle{Explaining the power-law degree distribution in a
  social commerce network}.
\newblock \bibinfo{journal}{\emph{Social Networks}} \bibinfo{volume}{31},
  \bibinfo{number}{4} (\bibinfo{year}{2009}), \bibinfo{pages}{262--270}.
\newblock


\bibitem[Wei et~al\mbox{.}(2012)]%
        {bao2}
\bibfield{author}{\bibinfo{person}{Ling-Yin Wei}, \bibinfo{person}{Yu Zheng},
  {and} \bibinfo{person}{Wen-Chih Peng}.} \bibinfo{year}{2012}\natexlab{}.
\newblock \showarticletitle{Constructing popular routes from uncertain
  trajectories}. In \bibinfo{booktitle}{\emph{SIGKDD}}.
  \bibinfo{publisher}{ACM}, \bibinfo{address}{New York, NY, USA},
  \bibinfo{pages}{195--203}.
\newblock


\bibitem[Wu et~al\mbox{.}(2015a)]%
        {wu2015core}
\bibfield{author}{\bibinfo{person}{Huanhuan Wu}, \bibinfo{person}{James Cheng},
  \bibinfo{person}{Yi Lu}, \bibinfo{person}{Yiping Ke}, \bibinfo{person}{Yuzhen
  Huang}, \bibinfo{person}{Da Yan}, {and} \bibinfo{person}{Hejun Wu}.}
  \bibinfo{year}{2015}\natexlab{a}.
\newblock \showarticletitle{Core decomposition in large temporal graphs}. In
  \bibinfo{booktitle}{\emph{2015 IEEE International Conference on Big Data (Big
  Data)}}. IEEE, \bibinfo{pages}{649--658}.
\newblock


\bibitem[Wu et~al\mbox{.}(2015b)]%
        {wu2015robust}
\bibfield{author}{\bibinfo{person}{Yubao Wu}, \bibinfo{person}{Ruoming Jin},
  \bibinfo{person}{Jing Li}, {and} \bibinfo{person}{Xiang Zhang}.}
  \bibinfo{year}{2015}\natexlab{b}.
\newblock \showarticletitle{Robust local community detection: on free rider
  effect and its elimination}.
\newblock \bibinfo{journal}{\emph{PVLDB}} \bibinfo{volume}{8},
  \bibinfo{number}{7} (\bibinfo{year}{2015}), \bibinfo{pages}{798--809}.
\newblock


\bibitem[Xiaolong and Linyuan(2014)]%
        {xiaolong2014review}
\bibfield{author}{\bibinfo{person}{Ren Xiaolong} {and} \bibinfo{person}{Lv
  Linyuan}.} \bibinfo{year}{2014}\natexlab{}.
\newblock \showarticletitle{Review of ranking nodes in complex networks}.
\newblock \bibinfo{journal}{\emph{Chinese Science Bulletin}}
  \bibinfo{volume}{59}, \bibinfo{number}{13} (\bibinfo{year}{2014}),
  \bibinfo{pages}{1175--1197}.
\newblock


\bibitem[Yang and Leskovec(2015)]%
        {yang2015defining}
\bibfield{author}{\bibinfo{person}{Jaewon Yang} {and} \bibinfo{person}{Jure
  Leskovec}.} \bibinfo{year}{2015}\natexlab{}.
\newblock \showarticletitle{Defining and evaluating network communities based
  on ground-truth}.
\newblock \bibinfo{journal}{\emph{KAIS}} \bibinfo{volume}{42},
  \bibinfo{number}{1} (\bibinfo{year}{2015}), \bibinfo{pages}{181--213}.
\newblock


\bibitem[Zhang et~al\mbox{.}(2020)]%
        {zhang2020exploring}
\bibfield{author}{\bibinfo{person}{Chen Zhang}, \bibinfo{person}{Fan Zhang},
  \bibinfo{person}{Wenjie Zhang}, \bibinfo{person}{Boge Liu},
  \bibinfo{person}{Ying Zhang}, \bibinfo{person}{Lu Qin}, {and}
  \bibinfo{person}{Xuemin Lin}.} \bibinfo{year}{2020}\natexlab{}.
\newblock \showarticletitle{Exploring finer granularity within the cores:
  Efficient (k, p)-core computation}. In \bibinfo{booktitle}{\emph{2020 IEEE
  36th International Conference on Data Engineering (ICDE)}}. IEEE,
  \bibinfo{pages}{181--192}.
\newblock


\bibitem[Zhang et~al\mbox{.}(2018a)]%
        {zhang2018finding}
\bibfield{author}{\bibinfo{person}{Fan Zhang}, \bibinfo{person}{Conggai Li},
  \bibinfo{person}{Ying Zhang}, \bibinfo{person}{Lu Qin}, {and}
  \bibinfo{person}{Wenjie Zhang}.} \bibinfo{year}{2018}\natexlab{a}.
\newblock \showarticletitle{Finding critical users in social communities: The
  collapsed core and truss problems}.
\newblock \bibinfo{journal}{\emph{TKDE}} \bibinfo{volume}{32},
  \bibinfo{number}{1} (\bibinfo{year}{2018}), \bibinfo{pages}{78--91}.
\newblock


\bibitem[Zhang et~al\mbox{.}(2018b)]%
        {zhang2018discovering}
\bibfield{author}{\bibinfo{person}{Fan Zhang}, \bibinfo{person}{Long Yuan},
  \bibinfo{person}{Ying Zhang}, \bibinfo{person}{Lu Qin},
  \bibinfo{person}{Xuemin Lin}, {and} \bibinfo{person}{Alexander Zhou}.}
  \bibinfo{year}{2018}\natexlab{b}.
\newblock \showarticletitle{Discovering strong communities with user engagement
  and tie strength}. In \bibinfo{booktitle}{\emph{DASFAA}}. Springer,
  \bibinfo{pages}{425--441}.
\newblock


\bibitem[Zhang et~al\mbox{.}(2017c)]%
        {zhang2017olak}
\bibfield{author}{\bibinfo{person}{Fan Zhang}, \bibinfo{person}{Wenjie Zhang},
  \bibinfo{person}{Ying Zhang}, \bibinfo{person}{Lu Qin}, {and}
  \bibinfo{person}{Xuemin Lin}.} \bibinfo{year}{2017}\natexlab{c}.
\newblock \showarticletitle{OLAK: an efficient algorithm to prevent unraveling
  in social networks}.
\newblock \bibinfo{journal}{\emph{VLDB}} (\bibinfo{year}{2017}),
  \bibinfo{pages}{649--660}.
\newblock


\bibitem[Zhang et~al\mbox{.}(2017a)]%
        {zhang2017finding}
\bibfield{author}{\bibinfo{person}{Fan Zhang}, \bibinfo{person}{Ying Zhang},
  \bibinfo{person}{Lu Qin}, \bibinfo{person}{Wenjie Zhang}, {and}
  \bibinfo{person}{Xuemin Lin}.} \bibinfo{year}{2017}\natexlab{a}.
\newblock \showarticletitle{Finding critical users for social network
  engagement: The collapsed k-core problem}. In
  \bibinfo{booktitle}{\emph{AAAI}}. \bibinfo{pages}{245–251}.
\newblock


\bibitem[Zhang et~al\mbox{.}(2017b)]%
        {zhang2017engagement}
\bibfield{author}{\bibinfo{person}{Fan Zhang}, \bibinfo{person}{Ying Zhang},
  \bibinfo{person}{Lu Qin}, \bibinfo{person}{Wenjie Zhang}, {and}
  \bibinfo{person}{Xuemin Lin}.} \bibinfo{year}{2017}\natexlab{b}.
\newblock \showarticletitle{Finding critical users for social network
  engagement: The collapsed k-core problem}. In
  \bibinfo{booktitle}{\emph{AAAI}} (San Francisco, California, USA),
  Vol.~\bibinfo{volume}{31}. \bibinfo{pages}{245–251}.
\newblock


\bibitem[Zhou et~al\mbox{.}(2019)]%
        {zhou2019k}
\bibfield{author}{\bibinfo{person}{Zhongxin Zhou}, \bibinfo{person}{Fan Zhang},
  \bibinfo{person}{Xuemin Lin}, \bibinfo{person}{Wenjie Zhang}, {and}
  \bibinfo{person}{Chen Chen}.} \bibinfo{year}{2019}\natexlab{}.
\newblock \showarticletitle{K-Core Maximization: An Edge Addition Approach.}.
  In \bibinfo{booktitle}{\emph{IJCAI}}. \bibinfo{pages}{4867--4873}.
\newblock


\bibitem[Zhu et~al\mbox{.}(2018)]%
        {zhu2018k}
\bibfield{author}{\bibinfo{person}{Weijie Zhu}, \bibinfo{person}{Chen Chen},
  \bibinfo{person}{Xiaoyang Wang}, {and} \bibinfo{person}{Xuemin Lin}.}
  \bibinfo{year}{2018}\natexlab{}.
\newblock \showarticletitle{K-core minimization: An edge manipulation
  approach}. In \bibinfo{booktitle}{\emph{CIKM}}. \bibinfo{pages}{1667--1670}.
\newblock


\end{thebibliography}

\end{document}